\newcommand{\X}{\mathcal{X}}
\newcommand{\R}{\mathbb{R}}
\newcommand{\x}{\boldsymbol{x}}
\newcommand{\bchi}{\boldsymbol{\chi}}
\renewcommand{\v}{\boldsymbol{v}}
\renewcommand{\b}{\boldsymbol{b}}
\newcommand{\e}{\boldsymbol{e}}
\newcommand{\q}{\boldsymbol{q}}
\renewcommand{\t}{\boldsymbol{t}}
\newcommand{\p}{\boldsymbol{p}}
\newcommand{\pir}{\pi^{\text{restricted}}}
\newcommand{\welr}{\operatorname{Welfare}^{\text{restricted}}}
\newcommand{\welbase}{\operatorname{Welfare}^{\text{base}}}
\newcommand{\pibase}{\pi^{\text{base}}}
\newcommand{\pib}{\pi^{\text{budgeted}}}
\newcommand{\welb}{\operatorname{Welfare}^{\text{budgeted}}}
\newcommand{\wel}{\operatorname{Welfare}}
\newcommand{\zs}{\operatorname{ZS}}
\newcommand{\pibka}{\pi^{\text{budgeted}-k,\alpha}}
\newcommand{\welbka}{\operatorname{Welfare}^{\text{budgeted}-k,\alpha}}
\newcommand{\pka}{p^{k,\alpha}}
\newcommand{\xka}{\x^{k,\alpha}}
\newtheorem{theorem}{Theorem}[section]
\newtheorem{lemma}[theorem]{Lemma}
\newtheorem{corollary}[theorem]{Corollary}
\newtheorem{definition}[theorem]{Definition}
\title{Position Auctions in AI-Generated Content}
\author{Santiago Balseiro}
\affiliation{%
  \institution{Columbia University \& Google Research}
}
\email{srb2155@columbia.edu}
\author{Kshipra Bhawalkar}
\author{Yuan Deng}
\author{Zhe Feng}
\author{Jieming Mao}
\author{Aranyak Mehta}
\author{Vahab Mirrokni}
\author{Renato Paes Leme}
\author{Di Wang}
\author{Song Zuo}
\affiliation{%
  \institution{Google Research}
}
\email{{kshipra,dengyuan,zhef,maojm,aranyak,mirrokni,renatoppl,wadi,szuo}@google.com}
\begin{abstract}
  We consider an extension to the classic position auctions in which  sponsored creatives can be added within AI generated content rather than shown in predefined slots. New challenges arise from the natural requirement that sponsored creatives should smoothly fit into the context. With the help of advanced LLM technologies, it becomes viable to accurately estimate the benefits of adding each individual sponsored creatives into each potential positions within the AI generated content by properly taking the context into account. Therefore, we assume one click-through rate estimation for each position-creative pair, rather than one uniform estimation for each sponsored creative across all positions in classic settings. As a result, the underlying optimization becomes a general matching problem, thus the substitution effects should be treated more carefully compared to standard position auction settings, where the slots are independent with each other.

In this work, we formalize a concrete mathematical model of the extended position auction problem and study the welfare-maximization and revenue-maximization mechanism design problem. Formally, we consider two different user behavior models and solve the mechanism design problems therein respectively. For the Multinomial Logit (MNL) model, which is order-insensitive, we can efficiently implement the optimal mechanisms. For the cascade model, which is order-sensitive, we provide approximately optimal solutions.
\end{abstract}
\begin{document}

\begin{titlepage}

\maketitle

\vspace{1cm}
\setcounter{tocdepth}{2} 
\tableofcontents

\end{titlepage}

\section{Introduction}

Position auctions \citep{varian2007position}, the foundational mechanism for sponsored search, have facilitated trillions of dollars in business over the past decades. These auctions sell positions that are identical except for their rank. Under the commonly used separability assumption,\footnote{The separability assumption says that the click-through rate $p_{ij}$ of ad $i$ in position $j$ can be decomposed into an ad-specific factor $\alpha_i$ and a position specific factor $\beta_j$, i.e., $p_{ij} = \alpha_i \cdot \beta_j$ for all ad $i$ and position $j$.} the competition for each position becomes effectively identical, leading to a uniform preference order among buyers and a significantly simplified optimization problem.

Inspired by advances in generative AI, we consider an extended position auction setting in which buyers compete for the placement of sponsored creatives within AI-generated commercial content. For example, Microsoft is testing embedding ads in the new Bing generative AI search engine~\citep{bing2023} and Perplexity started to show ads inside their generative AI chatbot~\citep{perplexity2024}. See Figure~\ref{fig:ads-in-genai} for an example illustration.
\begin{figure}[h]
\centering
\vspace{-0.25cm}
\begin{tcolorbox}[colback=white,arc=0pt,outer arc=0pt,boxrule=0.5pt,toprule=0.5pt,bottomrule=0.5pt,rightrule=0.5pt,leftrule=0.5pt,width=0.99\textwidth]
{Healthy pets need proper care: nutritious food, exercise, and vet visits. Understanding your pet's specific needs is important.}
\colorbox{blue!20}{\textit{\textcolor{blue}{\uline{[Sponsor A]}} offers resources and pet food options.}}
{Choose quality food, avoiding artificial additives. Regular vet check-ups are also key.}
\colorbox{blue!20}{\textit{\textcolor{blue}{\uline{[Sponsor B]}}'s grain-free dog food is one option.}} ...
\end{tcolorbox}
\vspace{-0.25cm}
\caption{An Illustration of Showing Ads within AI-Generated Contents.}
\label{fig:ads-in-genai}
\end{figure}

This novel application introduces context as a significant factor influencing click-through rates. Consequently, buyer preference orders for positions are no longer uniform, and hence the separability assumption no longer holds, requiring click-through rate measurement for each position-creative pair. Fortunately, the advanced semantic understanding of large language models makes this feasible.


However, two significant challenges arise when the separability assumption is removed. The first is that the underlying optimization is no longer a ranking problem but a general matching problem. The second, a direct consequence of the first, is the need for an explicit model of substitution effects, capturing user behavior in the presence of multiple sponsored creatives.

In this study, we address the design of optimal welfare-maximization and revenue-maximization position auctions within the context of two user behavior models: (1) the canonical {\em multinomial logit (MNL) model} \citep{luce1959individual,mcfadden_conditional_1974,plackett1975analysis}, where the user is insensitive to the presentation order of creatives; and (2) the {\em cascade model} \citep{craswell2008experimental}, where order matters. In particular, computational tractability is a primary concern, which ensures the practical viability of the proposed auction mechanisms. Therefore, we focus on the design of computationally efficient optimization algorithms, which can be used to obtain optimal welfare-maximization and revenue-maximization position auctions via Vickrey-Clarke-Groves (VCG) mechanisms \citep{vickrey1961counterspeculation,clarke1971multipart} and Myerson auctions \citep{myerson1981optimal}.

\subsection{Our Contribution}

\paragraph{\bf Model Formalization}
In Section \ref{sec:model}, we formalize the model of the extended position auctions, motivated by the novel application of placing sponsored creatives within AI-generated commercial content. 
As we drop the separability assumption, the input of the problem naturally includes a matrix of click-through rates and a vector of bids. For each position-creative pair, the click-through rate represents the probability that the creative being clicked when shown in the corresponding position while no other creative is shown, which we call the {\em standalone click-through rate}.

In the case where multiple creatives are shown, the user's attention is distributed over them, and hence the substitution effect emerges. To quantify such effects, certain {\em user behavior models} must be introduced, and in particular, we study the multinomial logit (MNL) model and the cascade model, respectively. Given a user behavior model, the final click-through rates for each creative is then a function of both the allocation of creatives and their standalone click-through rates in the allocated positions.

\paragraph{\bf Winner Determination Problem}
In Section \ref{sec:wdp}, we formalize the {\em winner determination problem} (WDP), prove the monotonicity of its solution, and demonstrate the reduction from the welfare-maximization and the revenue-maximization problem to it.

The WDP is to find the ``winning'' allocation out of the feasible set of allocations that maximizes the total payoff assuming winning advertisers pay their bids for any given ``bid'' vector. With an oracle that can solve the WDP exactly, one can easily implement the VCG mechanism: the allocation is the solution of the WDP with advertiser values as the ``bids'', and the VCG payment can be computed from the optimal payoff of the leave-one-out WDPs.

For the revenue-maximization problem, solving the WDP with advertiser {\em virtual values} as the ``bids'' yields the optimal allocation, and the corresponding payment can be computed following the envelope theorem \citep{milgrom2002envelope}. According to the envelope theorem, one must prove that the allocation function is monotonic in the sense that each advertiser's click-through rate (weakly) increases as its bid increases (while other bids fixed), and then the corresponding payment can be computed based on the integral of the allocation function. Fortunately, the monotonicity is automatically guaranteed as long as the WDP is solved exactly (Lemma \ref{lem:wdp-monotone}).

A useful side note is that the envelope theorem works even when the WDP is solved approximately, except that the monotonicity must to be proved separately for the implied allocation function. Thus, when an approximate solver for the WDP is given, one can invoke the envelope theorem to construct approximately welfare-maximization and revenue-maximization mechanism as long as the allocation function is monotonic.




\paragraph{\bf Optimal Auctions in MNL Model}
In Section \ref{sec:mnl}, we present both the welfare-maximization mechanism and the revenue-maximization mechanism under the MNL model (Theorem \ref{thm:mnl}).
Given the reduction to the WDP described above, it remains to provide a computationally efficient algorithm that can solve the WDP exactly. We achieve this by first relax the WDP as a linear program, and then construct an optimal integral solution to the WDP from any solution of the relaxed linear program (Lemma \ref{lem:mnl-int}). 
We emphasize that the WDP itself is a mixed integer program with a non-linear objective. Simple relaxations do not make it a linear program. We instead develop an innovative transformation from using allocations as the decision variables to two carefully chosen intermediate quantities as the decision variables. The choice of the intermediates not only admits a linear program formulation, but also guarantees a mapping back to integral allocation variables without any loss in the objective value.



\paragraph{\bf Approximately Optimal Auctions in Cascade Model}
In Section \ref{sec:cascade}, we present the approximately optimal welfare-maximization mechanism and the revenue-maximization mechanism under the Cascade model (Corollary \ref{cor:cascade-mech}).
Similar to the mechanism design under the MNL model, the key is the efficient algorithm that solves the WDP approximately. To achieve this, we introduce an intermediate problem that 4-approximates the WDP and then design algorithms to solve the intermediate problem (see Section \ref{sec:4apx}). Interestingly, the intermediate problem is similar to the so-called {\em budgeted matching problem} \citep{10.5555/1788814.1788839}, and we present a PTAS that solves it up to any $\epsilon > 0$ gap. However, the resulting allocation function is not necessarily monotonic. As an alternative, we propose an efficient algorithm with an approximation ratio of $O(\ln m)$, where $m$ is the number of positions, and prove the monotonicity of the corresponding allocation rule. We leave it as an interesting open problem to close the gap between the WDP and the WDP with a monotone allocation constraint, particularly, whether it is possible to obtain a monotonic allocation with a constant approximation ratio.




\subsection{Related Work}

\paragraph{Position Auctions and Online Advertising} 
In the seminal work, \citet{varian2007position} first introduced the game-theoretic model of position auctions for sponsored search. \citet{edelman2007internet} provided a comprehensive analysis of the Generalized Second-Price auction, examining its efficiency and the strategic behavior of advertisers. \citet{blumrosen2008position} introduced a setting with non-uniform conversion rates, where the conversion probability given a click varies depending on the ad's position. \citet{athey2011position} further advanced the field by incorporating consumer search behavior into the model, recognizing that consumer choices and search strategies influence the value of different ad positions. \citet{thompson2009computational} developed ``computational mechanism analysis'' to quantitatively compare different position auction formats, providing insights into their efficiency and effectiveness. More recently, \citet{gravin2024bidder} addressed the bidder selection problem, proposing efficient algorithms for selecting a subset of advertisers to participate in the auction.


\paragraph{Cascade Model}
The cascade model \citep{craswell2008experimental}, in which users scan slots sequentially from top to bottom, has been shown empirically to explain user choices effectively. Several studies have explored the cascade model and its extensions (see, e.g.,~\citep{KempeM08,AggarwalFMP08,GomesIM09,ChuNZ20}).~\citet{KempeM08} and~\citet{AggarwalFMP08} demonstrate that the winner determination problem for the cascade model can be solved in polynomial time using dynamic programming. Interestingly,~\citet{KempeM08} also examines a more general model with position multipliers, presenting a $4$-approximation algorithm and a QPTAS, but leaves open the question of whether the problem is NP-hard.

In contrast, non-sequential choice models propose that consumers evaluate all advertisements before making a selection.~\citet{JeziorskiSegal15} provides empirical evidence supporting that non-sequential models being better at explaining consumer choices.~\citet{GhoshM08} investigates a non-sequential model with a public choice model and private advertiser values, showing that an efficient allocation can be implemented with a VCG mechanism. While the winner determination problem can be NP-hard in general,~\citet{GhoshM08} offers a polynomial time algorithm using dynamic programming when preferences are single-peaked.
In the previous results, the choice probabilities are public, and the only private information is the advertisers' values for being chosen.~\citet{athey2011position} gives a notable exception, studying the bidding equilibrium under a cascade model where consumers need to click on ads to discover their quality.

\paragraph{LLM motivated topics}
Inspired by the successful application driven by large language models, many recent works start looking into economic problems that intersect with generative AI. \citet{duetting2024mechanism,dubey2024auctions,soumalias2024truthful,hajiaghayi2024ad} consider mechanism design in the presence of LLMs as an extension for online advertising with generative AI. \citet{sun2024mechanism} study the mechanism design problem with strategic agents during the LLM fine-tuning stage. \citet{lu2024eliciting,wu2024elicitationgpt} leverage the sematic understanding ability of LLMs to expand informative elicitation mechanisms to natural languages. Besides the above mechanism design problems, there are more works focusing on understanding the economic behaviors and strategical abilities of LLMs, e.g., \citep{chen2023put,deng2024llms,lore2023strategic,fan2024can,raman2024rationality,meta2022human,mao2023alympics,gemp2024states}.

\section{Model}\label{sec:model}

We introduce a model of extended position auctions with $m$ positions and $n$ advertisers, where the rendering order of positions is chosen by the auctioneer. In the application of position auctions within AI-generated commercial content, a position consists of the nearby context and a potential slot for sponsored creatives. In particular, the auctioneer can choose an order of the contents (and hence the positions) and instruct LLMs to generate the final content aligned with the selected order. 

We denote by $\X$ the set of feasible {\em allocations} of advertisers to position, and $\Sigma$ the set of feasible rendering orders of positions or simply {\em permutations}. We call the combination of an allocation and a permutation an {\em augmented allocation}. Allocations are feasible if each advertiser is allocated at at most one position, each position is allocated at most one advertiser, and the total number of allocated positions is at most $K \le m$. The set of feasible allocations is thus given by
\[
\X = \left\{ \x \in \{0,1\}^{n\times m} : \sum_{j=1}^m x_{ij} \le 1\,, \forall i \in [n]\,, \sum_{i=1}^n x_{ij} \le 1\,, \forall j \in [m]\,, \sum_{i=1}^n \sum_{j=1}^m x_{ij} \le K \right\}\,.
\]
The set of feasible permutations $\Sigma$ is simply the set of all permutations over $[m]$.

Advertisers' private information is their value for a click. We denote by $v_i \ge 0$ the value of advertiser $i \in [n]$. We assume that values are independently distributed with an absolutely continuous distribution. We denote by $F_i$ the cumulative distribution and by $f_i$ the density of advertiser $i$'s value, respectively.

Our model allows for externalities between advertisers' click-through rates. We denote by $\pi_i : \X \times \Sigma \rightarrow [0,1]$ a function that maps the augmented allocation $\bchi := (\x, \sigma) \in \X\times \Sigma$ to a click-through rate $\pi_i(\bchi)$ or equivalently $\pi_i(\x, \sigma)$. Click-through rates are usually estimated by platforms using sophisticated predictions models and we assume that $\pi_i(\x, \sigma)$ is common knowledge.

By the revelation principle, we can restrict attention to direct mechanisms in which every advertiser reports their valuation. A mechanism is a pair $(\q,\t)$ characterized by an augmented allocation function $\q: \R_+^n \rightarrow \Delta(\X \times \Sigma)$ and payment functions $\t: \R_+^n \rightarrow \R_+^n$. The augmented allocation function maps a vector of reported values $\v \in \R_+^n$ to a lottery over augmented allocations $\q(\v) \in \X \times \Sigma$. We denote by $q(\v)[\bchi]$ or equivalently $q(\v)[\x, \sigma]$ the probability of choosing augmented allocation $(\x, \sigma) \in \X \times \Sigma$. The payment function gives the expected payment $t_i(\v)$ made by advertiser $i$ when reported values are $\v \in \R_+^n$.

For simplicity, we may abuse the above notations by omitting $\sigma$ (e.g., $\pi_i(\x, \sigma)$ as $\pi_i(\x)$, $q(\v)[\x, \sigma]$ as $q(\v)[\x]$, $\X \times \Sigma$ as $\X$) when the value of $\sigma$ is clear from the context or the outcomes are insensitive to $\sigma$, e.g., $\pi_i(\x, \sigma) \equiv \pi_i(\x),~\forall \sigma \in \Sigma$.

Advertisers are quasi-linear utility maximizers, i.e., they maximize the difference between their expected value and the payments made to the mechanism. The expected utility of advertiser $i$ when their value is $v_i \in \R_+$ and the vector of reported values of $\hat \v \in \R_+^n$ is
\[
    u_i(\hat \v, v_i) = v_i \sum_{\x \in \X, \sigma \in \Sigma} q(\hat \v)[\x, \sigma] \pi_i(\x, \sigma)- t_i(\hat \v)\,.
\]
We restrict attention to mechanisms that are dominant-strategy incentive compatible (IC) and individually rational (IR). A mechanism is IC if every buyer is better off reporting their value truthfully regardless of the reported values of their competitors. That is,
\begin{align}\label{eq:IC}
    u_i((v_i,\hat \v_{-i}), v_i) \ge u_i(\hat \v, v_i)\,,\quad \forall i \in [n]\,, \hat \v \in \R_+^n\,, v \in \R_+\,,\tag{IC}
\end{align}
where we denote by $\hat \v_{-i} \in \R_+^{n-1}$ the vector of advertisers values excluding advertiser $i$. A mechanism is IR if advertisers get a non-negative utility from participating
\begin{align}\label{eq:IR}
    u_i((v_i,\hat \v_{-i}), v_i) \ge 0 \,,\quad \forall i \in [n]\,, \hat \v \in \R_+^n\,, v \in \R_+\,.\tag{IR}
\end{align}

\subsection{User Behavior Models}

We consider different models for click-through rate that incorporate user behavior. The models take as input a matrix $\p \in \R^{n\times m}$, where $p_{ij}$ is the probability that a user clicks on advertisement $i$ when displayed in position $j$ when no other advertisement is shown in any other position. We consider two popular user behavior model that incorporate substitution effects between advertisers: a multinomial logit model and a cascade model.

\paragraph{Multinomial Logit Model (MNL)} The MNL model \citep{luce1959individual,mcfadden_conditional_1974,plackett1975analysis} assumes that the user obtains independent random utilities from each ad, which are drawn from independent Gumbell distributions. The users considers all advertisements simultaneously and chooses one with the highest utility. MNL models are used extensively in marketing, economics, and operations literature. These models are popular because they are interpretable, flexible, easy to estimate, and extendable. Let $\rho_{ij} = \operatorname{logit}(p_{ij}) = \log(p_{ij}/(1-p_{ij}))$ be the log-odds of showing advertiser $i$ in slot $j$. The click-through rate of advertiser $i$ under an augmented allocation $(\x, \sigma)$ is 
\[
    \pi_i(\x, \sigma) \equiv \pi_i(\x) = \frac{\sum_{j \in [m]} x_{ij} \exp(\rho_{ij})}{1 + \sum_{i' \in [n]} \sum_{j \in [m]} x_{i'j} \exp(\rho_{i'j})}\,.
\]
Since the click-through rate here is insensitive to the permutation $\sigma$, we will omit $\sigma$ from now on whenever the MNL model is adopted. When advertiser $i$ is shown in slot $j$ and no ads are shown in other slots, we obtain that $\pi(\e^{ij}) = p_{ij}$ where $\e^{ij} \in \X$ is the allocation with one in position $(i,j)$ and zero otherwise.


\paragraph{Cascade Model} The Cascade model \citep{craswell2008experimental} assumes that the user reads through the ads in the given rendering order and leaves the system once a click happens.
Assuming all ad-clicking decisions are independent, the click-through rate of advertiser $i$ under allocation $x\in\X$ and permutation $\sigma$ is discounted by the product of $(1 - p_{i'j'})$ for all ads $i'$ shown before it, i.e.,

\[
    \pi_i(\x, \sigma) = \sum_{j \in [m]} x_{ij} p_{ij} \prod_{j': \sigma(j')<\sigma(j)} \left( 1 - \sum_{i' \in [n]}  x_{i'j'} p_{i'j'} \right)\,.
\]

\subsection{Objectives}

We consider two objectives for the ad platform, welfare and revenue maximization. When the mechanism is IC and IR, the advertisers report truthfully and we can evaluate the metrics by taking expectation over the distribution of advertisers' values.

The welfare is the sum of the platform revenue and the advertiser utility. Since payments are internal transfers of wealth, they cancel out and the expected welfare of a mechanism $(\q,\t)$ is
\[
    \operatorname{Welfare}(\q,\t) = \mathbb E_{\v}\left[ \sum_{i\in[n]}v_i \sum_{\x \in \X, \sigma \in \Sigma} q(\v)[\x, \sigma] \pi_i(\x, \sigma) \right]\,.
\]
The revenue of the ad platform is just the sum of the advertisers' payments:
\[
    \operatorname{Revenue}(\q,\t) = \mathbb E_{\v}\left[ \sum_{i\in[n]} t_i(\v) \right]\,.
\]

\section{Winner Determination Problem and Reduction from Mechanism Design Problems}\label{sec:wdp}

In this section, we first formally define the {\em winner determination problem} (WDP) and then show how the optimal welfare-maximization auction and the optimal revenue-maximization auction are constructed with an exact solver of the WDP. We also extend the reduction to the case where only an approximate solver is given, in which the corresponding auctions downgrade to approximately optimal auctions.

\subsection{Winner Determination Problem}

The WDP determines, given a vector of ``bids'' $\b \in \R^n$, the augmented allocation $\bchi = (\x, \sigma) \in \X\times\Sigma$ that maximizes the payoff assuming advertisers pay their bids. This is given by
\[
    \bchi^*(\b) = (\x^*(\b), \sigma^*(\b)) \in \arg\max_{\bchi \in \X\times\Sigma} \sum_{i\in[n]} b_i \pi_i(\bchi).
\]
In following sections, we discuss how to efficiently solve the WDP. Before that, we prove monotonicity of click-through rates of an advertiser in terms of its bid when the WDP is solved exactly.

\begin{lemma}\label{lem:wdp-monotone}
    Fix an advertiser $i \in [n]$. Let $\bchi^*(\b)$ be an optimal augmented allocation for a bid vector $\b$ and $\bchi^*(\hat \b)$ be an optimal augmented allocation for the vector $\hat \b$ with $\hat b_i \ge b_i$ and $\hat b_{i'} = \hat b_{i'}$ for $i'\neq i$. Then, $\pi_i( \bchi^*(\hat \b) ) \ge \pi_i( \bchi^*(\b))$.
\end{lemma}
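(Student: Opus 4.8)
The plan is to use a standard revealed-preference (or ``weak monotonicity'') argument that plays the two optimal allocations against each other under both bid vectors. The only facts needed are that $\bchi^*(\b)$ is optimal under $\b$ and $\bchi^*(\hat\b)$ is optimal under $\hat\b$, so no structural property of $\pi_i$ beyond the definition of the WDP objective is required.

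First I would write down the two optimality inequalities. Since $\bchi^*(\b)$ maximizes $\sum_{i'} b_{i'}\pi_{i'}(\cdot)$ over $\X\times\Sigma$, evaluating at the competing allocation $\bchi^*(\hat\b)$ gives
\[
\sum_{i'\in[n]} b_{i'}\,\pi_{i'}(\bchi^*(\b)) \;\ge\; \sum_{i'\in[n]} b_{i'}\,\pi_{i'}(\bchi^*(\hat\b))\,.
\]
Symmetrically, optimality of $\bchi^*(\hat\b)$ under $\hat\b$ gives
\[
\sum_{i'\in[n]} \hat b_{i'}\,\pi_{i'}(\bchi^*(\hat\b)) \;\ge\; \sum_{i'\in[n]} \hat b_{i'}\,\pi_{i'}(\bchi^*(\b))\,.
\]

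Next I would add these two inequalities and regroup, which yields
\[
\sum_{i'\in[n]} (\hat b_{i'} - b_{i'})\big(\pi_{i'}(\bchi^*(\hat\b)) - \pi_{i'}(\bchi^*(\b))\big) \;\ge\; 0\,.
\]
The key cancellation is that $\hat b_{i'} = b_{i'}$ for every $i' \neq i$ by hypothesis, so every term in the sum vanishes except the one for advertiser $i$. This collapses the inequality to
\[
(\hat b_i - b_i)\big(\pi_i(\bchi^*(\hat\b)) - \pi_i(\bchi^*(\b))\big) \;\ge\; 0\,.
\]
Dividing through by the nonnegative scalar $\hat b_i - b_i$ (in the case $\hat b_i > b_i$) immediately delivers the claimed inequality $\pi_i(\bchi^*(\hat\b)) \ge \pi_i(\bchi^*(\b))$.

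There is no deep obstacle here; the whole argument is a self-contained two-line exchange computation. The one point requiring care is the boundary case $\hat b_i = b_i$, where $\hat\b = \b$ and the collapsed inequality degenerates to $0 \ge 0$, giving no information --- indeed, two distinct maximizers of the same objective may have different click-through rates. I would dispatch this either by restricting the monotonicity claim to $\hat b_i > b_i$ (which is all the envelope-theorem construction of Section~\ref{sec:wdp} requires) or by fixing a consistent tie-breaking rule so that $\bchi^*$ is a genuine function and monotonicity holds on the boundary by convention. The substantive content --- monotonicity for strictly increased bids --- follows entirely from adding the two optimality inequalities above.
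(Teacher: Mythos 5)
Your proof is correct and is essentially the same argument as the paper's: both combine the two revealed-preference optimality inequalities to conclude $(\hat b_i - b_i)\bigl(\pi_i(\bchi^*(\hat\b)) - \pi_i(\bchi^*(\b))\bigr) \ge 0$, the paper merely phrasing the exchange as a proof by contradiction rather than a direct summation. Your added remark about the degenerate case $\hat b_i = b_i$ and tie-breaking is a reasonable point of care but does not change the substance.
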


\begin{proof}
Let $\boldsymbol{\pi} = \big( \pi_i( \bchi^*(\b)) \big)_{i \in [n]}$ and $\hat{\boldsymbol{\pi}} = \big( \pi_i( \bchi^*(\hat \b)) \big)_{i \in [n]}$. We prove the result by contradiction. Suppose $\hat b_i > b_i$ but $\hat \pi_i < \pi_i$. By the optimality of $\bchi^*(\hat \b)$, we have that
\begin{equation}\label{eq:mono1}
    \hat b_i \hat \pi_i + \sum_{i'\neq i} b_i \hat \pi_i \ge \hat b_i \pi_i + \sum_{i'\neq i} b_i  \pi_i\,.
\end{equation}
Because $\hat b_i > b_i$ and $\hat \pi_i < \pi_i$, we have $(\hat b_i - b_i) \cdot (\hat \pi_i - \pi_i) < 0$, which implies after re-arranging that
\begin{equation}\label{eq:mono2}
    b_i \hat \pi_i + \hat b_i \pi_i > \hat b_i \hat \pi_i + b_i \pi_i\,.
\end{equation}
Summing \eqref{eq:mono1} and \eqref{eq:mono2} we obtain that
\[
    b_i \hat \pi_i + \sum_{i'\neq i} b_i \hat \pi_i > b_i \pi_i + \sum_{i'\neq i} b_i  \pi_i\,,
\]
which contradicts the optimality of $\bchi^*(\b)$.
\end{proof}

\subsection{Mechanism Design Problem}\label{subsec:mechanism}
In this subsection, we show how the welfare-maximization auction and the revenue-maximization auction can be implemented when an exact optimization algorithm of the WDP is given.

\paragraph{Welfare Maximization} The ad platform's problem is to solve
\[
    \max_{(\q,\t) : \eqref{eq:IC} \text{ and } \eqref{eq:IR}} \operatorname{Welfare}(\q,\t)\,.
\]
If we ignore the individual rationality (IR) and incentive compatibility (IC) constraints, we can write the problem as 
\[
    \max_{(\q,\t)} \operatorname{Welfare}(\q,\t) = \max_{\q} \mathbb E_{\v}\left[ \sum_{i\in[n]}v_i \sum_{\bchi \in \X \times \Sigma} q(\v)[\bchi] \pi_i(\bchi) \right] = \mathbb E_{\v}\left[ \max_{\bchi \in \X \times \Sigma} \sum_{i\in[n]}v_i \pi_i(\bchi) \right]\,,
\]
where the first equation follows because payments do not enter the objective, and the second because the platform should optimize pointwise over values and pick the allocation with the highest values. Denote the allocation that maximizes welfare for each realization of advertisers' values by $\bchi^*(\v)$. 

We can implement a welfare-maximizing augmented allocation $\bchi^*(\v)$ truthfully using the Vickrey–Clarke–Groves (VCG) mechanism. Under the VCG mechanism, the platform asks advertisers to report their values $\v$, and then it chooses the augmented allocation $\bchi^*(\v)$, and charges each advertiser the externality they impose on others. That is, the augmented allocation is $q(\v)[\bchi] = 1$ for $\bchi = \bchi^*(\v)$ and $q(\v)[\bchi] = 0$ otherwise and the payment of advertiser $i$ is
\begin{align}\label{eq:payments-vcg}
    t_i(\v) = \sum_{i'\neq i} v_{i'} \pi_{i'}(\bchi^*(0,\v_{-i})) - \sum_{i'\neq i} v_{i'} \pi_{i'}(\bchi^*(\v))\,.   
\end{align}
The VCG mechanism is IC. Because values are non-negative, we have that payments satisfy $t_i(\v) \ge 0$ and the mechanism is also IR.

\paragraph{Revenue Maximization} The ad platform problem is to solve
\[
    \max_{(\q,\t) : \eqref{eq:IC} \text{ and } \eqref{eq:IR}} \operatorname{Revenue}(\q,\t)\,.
\]
We can use the Envelope theorem \citep{milgrom2002envelope} to characterize optimal mechanisms. Let 
\[
    y_i(\v) = \sum_{\bchi \in \X \times \sigma} q( \v)[\bchi] \pi_i(\bchi)\,,
\]
be the probability that advertiser $i$ is allocated (in any slot) under a mechanism 
$(\q,\t)$. We refer to $y_i(\v)$ as the \emph{total click probability} of advertiser $i$. Then, a mechanism is IC if and only if the total click probability $y_i(\v)$ is non-decreasing in $v_i$ for all advertiser $i \in [n]$ and payments satisfy
\begin{align}\label{eq:envelope}
    t_i(\v) = t_i(0,\v_{-i}) + v_i y_i(\v) - \int_0^{v_i} y_i(z,\v_{-i}) \text{d} z\,. 
\end{align}
Moreover, a mechanism is IR if $t_i(0,\v_{-i}) \le 0$. At optimality, we should set the payment of the lowest type to zero, i.e., $t_i(0,\v_{-i})=0$. Using integration by parts, we can write the revenue of an optimal mechanism as 
\[
    \operatorname{Revenue}(\q,\t) = \mathbb E_{\v}\left[ \sum_{i\in[n]} \left(v_i - \frac {1-F_i(v_i)} {f_i(v_i)} \right) y_i(\v) \right] = \mathbb E_{\v}\left[ \sum_{i\in[n]} \phi_i(v_i) \sum_{\bchi \in \X \times \Sigma} q(\v)[\bchi] \pi_i(\bchi) \right] \,,
\]
where the last equation follows from using our formula for the total click probability $y_i(\v)$ and letting $\phi_i(v_i) = v_i - (1-F_i(v_i))/f_i(v_i)$ be the virtual value of advertiser $i \in [n]$.

A similar argument to the welfare case implies that the platform should choose the allocation that maximizes virtual values, i.e., choose $\bchi^*(\b)$ with $b_i = \phi_i(v_i)$. This allocation can be implemented truthfully if payments are chosen to satisfy \eqref{eq:envelope} and the total click probability is non-decreasing. A sufficient condition for the total click probability $y_i(\v)$ to be monotonic is that virtual values $\phi_i(v_i)$ are non-decreasing and that $\pi_i(\bchi^*(\v))$ is non-decreasing in $v_i$ for every advertiser $i$ and valuations $v_{-i}$ of competitors.

\subsection{Mechanism Design with Approximate Solution of WDP}\label{subsec:approx-mech}

A useful observation here is that the envelope theorem works even when the WDP is solved approximately. However, additional care is needed for proving the monotonicity of the induced allocation function, as Lemma \ref{lem:wdp-monotone} does not apply to approximate solvers.

Following the reductions in Section \ref{subsec:mechanism}, an approximately optimal solution to the WDP yields approximately optimal welfare (or revenue) from the mechanism, and the corresponding payment is computed according to the envelope theorem.

\section{Multinomial Logit Model}\label{sec:mnl}

A special case of the MNL model is when the log-odds $\rho_{ij}$ are additively separable, i.e., $\rho_{ij} = \rho_i + \rho_j$. \citet{abeliuk2014optimizing} showed that the winner determination problem can be solved in quadratic time when log-odds are additively separable. Additive separability allows no interaction between positions and advertisers, as every advertiser equally prefers each position. We next discuss how to solve the WDP for general log-odds efficiently. Since the MNL model is constant to the permutation, we omit the permutation $\sigma$ and focus on the allocation $x$ for the entire section.

The key of the results in this section is an efficient algorithm solving the WDP for general log-odds. 
Taking the algorithm as given, with the reduction introduced in Section \ref{subsec:mechanism}, we can implement both the welfare-maximization auction and the revenue-maximization auction. Formally we have the following theorem:

\begin{theorem}[Optimal Auctions in MNL model]\label{thm:mnl}
  Under the MNL model, there exists:
  \begin{itemize}
      \item A welfare-maximization auction that is computationally efficient and satisfies IC and IR;
      \item A revenue-maximization auction that is computationally efficient and satisfies $\epsilon$-IC and IR.
  \end{itemize}
\end{theorem}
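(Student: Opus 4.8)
The plan is to reduce both claims to the single task of solving the WDP exactly and efficiently, and then invoke the constructions of Sections~\ref{subsec:mechanism} and~\ref{subsec:approx-mech}. Writing $w_{ij} = \exp(\rho_{ij})$, the MNL winner determination objective is the linear-fractional function
\[
    \sum_{i\in[n]} b_i \pi_i(\x) = \frac{\sum_{i,j} b_i w_{ij} x_{ij}}{1 + \sum_{i,j} w_{ij} x_{ij}}\,,
\]
so the difficulty flagged in the introduction is that this is a ratio of two linear forms rather than a linear objective. First I would linearize it by a Charnes--Cooper change of variables: introduce the two families of intermediate quantities $t = 1/(1 + \sum_{i,j} w_{ij} x_{ij})$ and $y_{ij} = t\, x_{ij}$. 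Under this substitution the objective becomes the linear form $\sum_{i,j} b_i w_{ij} y_{ij}$, the normalization $\sum_{i,j} w_{ij} y_{ij} + t = 1$ becomes a single linear equality, and each constraint of $\X$ scales to a linear inequality in the new variables (namely $\sum_j y_{ij} \le t$, $\sum_i y_{ij} \le t$, $\sum_{i,j} y_{ij} \le K t$, and $y_{ij}, t \ge 0$). This is a linear program, and since any feasible $\x$ gives denominator at least $1$, one checks $t>0$ at every feasible point, so $x_{ij} = y_{ij}/t$ is always recoverable.

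The crux --- and the content of Lemma~\ref{lem:mnl-int} --- is to map an optimal LP solution back to an integral allocation without losing objective value. Here I would argue integrality of the feasible polytope directly: the relaxation of $\X$ (the bipartite degree constraints together with the single global cardinality bound $\sum_{i,j} x_{ij} \le K$) is a network-flow polytope --- route flow $s \to i \to j \to \mathrm{sink}$ with a capacity-$K$ arc out of the source --- hence it has a totally unimodular description and integral vertices. Because a linear-fractional function is simultaneously quasiconvex and quasiconcave, its maximum over a polytope is attained at a vertex; since the vertices are integral, the optimum of the relaxed WDP is itself an integral allocation, recovered as $x_{ij}=y_{ij}/t$ from the LP optimum with no loss. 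With an exact, efficient WDP solver in hand, the welfare claim follows immediately: feed the reported values as bids, allocate $\bchi^*(\v)$, solve the leave-one-out problems $\bchi^*(0,\v_{-i})$ with the same algorithm, and charge the VCG prices~\eqref{eq:payments-vcg}; IC is the standard VCG externality argument and IR holds because values are nonnegative.

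For the revenue claim I would run the same solver on the virtual-value bids $b_i = \phi_i(v_i)$ and set payments by the envelope formula~\eqref{eq:envelope} with $t_i(0,\v_{-i}) = 0$, which gives IR. Monotonicity of the total click probability $y_i$ follows by composing Lemma~\ref{lem:wdp-monotone} (which makes $\pi_i(\bchi^*(\b))$ nondecreasing in $b_i$) with the nondecreasing map $v_i \mapsto \phi_i(v_i)$; irregular distributions are accommodated by ironing the virtual values beforehand so that each $\phi_i$ is nondecreasing, preserving the monotonicity needed for the incentive characterization.

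The one place exactness breaks --- and the reason only $\epsilon$-IC is claimed for revenue --- is the payment integral $\int_0^{v_i} y_i(z,\v_{-i})\,\mathrm d z$. The integrand is a step function in $z$ whose breakpoints are the thresholds at which the WDP optimum switches allocations, and enumerating all of them exactly is not known to be efficient; I would instead approximate the integral on a sufficiently fine grid of $z$-values, solving the WDP at each grid point. A standard argument bounds the resulting incentive loss by $\epsilon$ for a grid whose size is polynomial in $1/\epsilon$ and the input, yielding an efficient $\epsilon$-IC, IR revenue auction. I expect the main obstacle to be the integrality step (Lemma~\ref{lem:mnl-int}): the fractional objective rules out a naive LP relaxation, so one must verify both that the Charnes--Cooper LP is exact and that the global cardinality constraint does not destroy integrality of the underlying matching polytope.
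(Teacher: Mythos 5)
Your proposal is correct and follows essentially the same route as the paper: the Charnes--Cooper change of variables $(y_{ij},t)$ is exactly the paper's LP \eqref{eq:lp}, the integrality argument via vertex attainment of the (quasi)convex fractional objective over the integral matching polytope is the content of Lemma~\ref{lem:mnl-int}, and the VCG/virtual-value reductions with the discretized payment integral match Sections~\ref{subsec:mechanism} and~\ref{subsec:approx-mech}. (Your observation that the objective is quasiconvex rather than convex is, if anything, the more careful statement of the property needed for the extreme-point argument.)
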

The $\epsilon$ loss on IC for the revenue-maximization auction comes from the $\epsilon$-accuracy loss of payment computation, where one needs to compute the integral of the allocation function with potentially exponentially many pieces. 
The $\epsilon$ loss then comes from discretizing the space of the integral to guarantee computational efficiency.
We now proceed to discuss how to efficiently solve the WDP under the MNL model, despite of the non-linear objective:
\[
    \max_{\x \in \X}  \frac{\sum_{i \in [n]} \sum_{j \in [m]} b_i x_{ij} \exp(\rho_{ij})}{1 + \sum_{i' \in [n]} \sum_{j \in [m]} x_{i'j} \exp(\rho_{i'j})}\,.
\]
Consider the convex relaxation in which we optimize over the convex hull of $\X$ given by
\[
    \bar \X = \left\{ \x \in \mathbb R_+^{n\times m} : \sum_{j=1}^m x_{ij} \le 1\,, \forall i \in [n]\,, \sum_{i=1}^n x_{ij} \le 1\,, \forall j \in [m]\,, \sum_{i=1}^n \sum_{j=1}^m x_{ij} \le K \right\}\,.
\]
We present a transformation that allows us to write the linear-fractional problem as a linear program. We introduce the following variables
\[
    y_{ij} = \frac{x_{ij}} {1 + \sum_{i' \in [n]} \sum_{j \in [m]} x_{i'j} \exp(\rho_{i'j})}
\qquad \text{and} \qquad
    z = \frac 1 {1 + \sum_{i' \in [n]} \sum_{j \in [m]} x_{i'j} \exp(\rho_{i'j})}\,.
\]

We can equivalently solve the following linear program:
\begin{equation}\label{eq:lp}
\begin{aligned}
\max_{y \in \mathbb R_+^{n \times m}, z \in \mathbb R}  \, & \sum_{i \in [n]} \sum_{j \in [m]} b_i y_{ij} \exp(\rho_{ij}) \\
    \text{s.t.}\,
    & \sum_{j=1}^m y_{ij} \le z\,, \forall i \in [n]\,,\\ & \sum_{i=1}^n y_{ij} \le z\,, \forall j \in [m]\,, \\
    & \sum_{i=1}^n \sum_{j=1}^m y_{ij} \le K z\,,\\
    & \sum_{i \in [n]} \sum_{j \in [m]} y_{ij} \exp(\rho_{ij}) + z = 1\,.\\
    \end{aligned}
\end{equation}

\begin{lemma}\label{lem:mnl-int} Suppose $\b \neq 0$. Let $(\boldsymbol y,z)$ be an optimal solution to \eqref{eq:lp}. Then, an optimal integral matching can be obtained by setting
\[
    \x = \frac {\boldsymbol{y}} z\,.
\]
\end{lemma}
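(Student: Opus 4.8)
The plan is to treat the MNL winner-determination objective as a linear-fractional program over the convex relaxation $\bar\X$ and to recognize the passage to \eqref{eq:lp} as the classical Charnes--Cooper linearization. First I would make the change of variables explicit as a bijection: the map $\phi(\x) = (\x\, z(\x),\, z(\x))$ with $z(\x) = 1/(1 + \sum_{i,j} x_{ij} e^{\rho_{ij}})$ sends $\bar\X$ onto the feasible region $P$ of \eqref{eq:lp}, and its inverse is exactly $\psi(\boldsymbol y, z) = \boldsymbol y / z$. I would check the two routine facts that make this work: (i) every feasible $(\boldsymbol y, z) \in P$ has $z > 0$ (if $z = 0$, the row constraints together with $\boldsymbol y \ge 0$ force $\boldsymbol y = \0$, contradicting the normalization $\sum_{i,j} y_{ij} e^{\rho_{ij}} + z = 1$), so $\psi$ is well defined on $P$; and (ii) the LP objective at $(\boldsymbol y, z)$ equals the fractional objective at $\x = \boldsymbol y/z$. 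Thus $\phi$ and $\psi$ are objective-preserving inverse bijections, and the optimal value of \eqref{eq:lp} equals $\max_{\bar\X}$ of the fractional objective.

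The crux — and the step needing the most care — is transferring integrality across this \emph{nonlinear} correspondence. The key observation I would exploit is that $\phi$ is a projective (fractional-linear) map: in homogeneous coordinates it sends $(\x, 1)$ to $(\x, 1)/(1 + \sum_{i,j} x_{ij} e^{\rho_{ij}})$. Because the denominator is strictly positive on the segment joining any two points of $\bar\X$, the map $\phi$ carries line segments to line segments and preserves betweenness (the convex parameter is reparametrized by a monotone M\"obius transformation fixing the endpoints). I would use the contrapositive: if $\psi(\boldsymbol y, z) = \x$ were a proper convex combination $\x = \tfrac12(\u + \w)$ of distinct points of $\bar\X$, then $(\boldsymbol y, z) = \phi(\x)$ would be a proper convex combination of the distinct points $\phi(\u), \phi(\w) \in P$, contradicting extremality. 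Hence $\psi$ maps every extreme point of $P$ to an extreme point of $\bar\X$. I would therefore take $(\boldsymbol y, z)$ to be a \emph{basic} optimal solution of \eqref{eq:lp} (the output of a vertex-based solver); this genuinely matters, since at a non-extremal optimal solution $\boldsymbol y / z$ can be fractional.

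It then remains to argue that the extreme points of $\bar\X$ are exactly the integral matchings, i.e.\ that $\bar\X = \operatorname{conv}(\X)$. Here $\bar\X$ is the degree-constrained bipartite matching polytope on advertisers versus positions, intersected with the single cardinality constraint $\sum_{i,j} x_{ij} \le K$. I would establish integrality by modeling it as a transportation/min-cost-flow polytope: add a source feeding all advertisers and a sink draining all positions, and encode $\sum_{i,j} x_{ij} \le K$ as a single capacity-$K$ arc; the resulting node--arc incidence matrix is totally unimodular with integral right-hand sides, so every vertex is integral. (Alternatively, a direct forest argument works: an alternating $\pm\varepsilon$ perturbation around any cycle in the support leaves all degree sums and the total $\sum_{i,j} x_{ij}$ unchanged, so a vertex has acyclic support and is integral.)

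Combining the pieces finishes the argument: for a basic optimal $(\boldsymbol y, z)$, the point $\x = \boldsymbol y / z = \psi(\boldsymbol y, z)$ is an extreme point of $\bar\X$, hence $\x \in \X$ is an integral matching; and since the fractional objective at $\x$ equals the optimal value of \eqref{eq:lp}, which equals $\max_{\bar\X} \ge \max_{\X}$ of the fractional objective, $\x$ solves the WDP exactly. I expect the main obstacle to be the middle step — making rigorous that the nonlinear Charnes--Cooper map preserves extreme points — since a naive relaxation argument (e.g.\ quasiconvexity of the fractional objective) only yields the \emph{existence} of an integral optimum rather than the conclusion that $\boldsymbol y/z$ itself is one. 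The projectivity of $\phi$ is exactly what closes that gap, in combination with the integrality of $\bar\X$.
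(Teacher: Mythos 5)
Your proof is correct, and on the one delicate point it is actually tighter than the paper's own argument, though the overall skeleton (Charnes--Cooper linearization of the linear-fractional objective plus integrality of the degree- and cardinality-constrained bipartite polytope $\bar\X$) is the same. The paper handles the integrality transfer differently: it observes that the linear-fractional objective is quasiconvex, invokes the Bauer maximum principle to conclude that the maximum over the compact convex set $\bar\X$ is attained at an extreme point, and hence that optimizing over $\X$ and over $\bar\X$ give the same value; it then cites the standard equivalence of the fractional program with \eqref{eq:lp} to conclude. As you correctly diagnose, that argument only yields the \emph{existence} of an integral optimum and that $\boldsymbol y/z$ is optimal for the relaxation --- it does not by itself show that the particular point $\boldsymbol y/z$ is integral. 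Your route closes exactly that gap: the Charnes--Cooper map is projective with an affine, strictly positive denominator on $\bar\X$, so it carries segments to segments and extreme points to extreme points, and therefore a \emph{basic} optimal solution of \eqref{eq:lp} pulls back to a vertex of $\bar\X$, i.e.\ an integral matching. The price is the extra hypothesis that $(\boldsymbol y,z)$ be a vertex of the LP, which the lemma statement omits but which is genuinely needed (if two distinct integral matchings are both optimal, the midpoint of their images is LP-optimal and maps to a fractional point), so your caveat is a feature, not a bug. Two minor remarks: your feasibility argument already forces $z>0$ via the normalization constraint alone, so the hypothesis $\b\neq 0$ is not needed for well-definedness of $\boldsymbol y/z$ (the paper uses it only to exclude $z=0$ at optimality); and your total-unimodularity or alternating-cycle justification of the integrality of $\bar\X$ supplies a step the paper simply asserts.
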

\begin{proof}
First, we argue that the WDP problem and its convex relaxation have the same optimal solution, i.e., the optimal solution of the relaxation is integral. Note that the objective $\sum_{i \in [n]} b_i \pi_i(\x)$ is convex in $\x$ because linear-fractional functions are convex on its domain, see Chapter 2.3.3 of \cite{boyd2004convex}. Because we are maximizing a convex and continuous function over a compact and convex set, we obtain from Bauer maximum principle that the problem attains its maximum at some extreme point of that set. Because the extreme points of the matching polytope $\bar \X$ are integral matchings, we obtain that
\[
    \arg\max_{\x \in \X} \sum_{i \in [n]} b_i \pi_i(\x) = \arg\max_{\x \in \bar \X} \sum_{i \in [n]} b_i \pi_i(\x)\,,
\]
that is, it is equivalent to optimize over integral matchings in $\X$ or fractional matchings in $\bar \X$.

Second, note that $z > 0$. We cannot have $z < 0$ because $\boldsymbol{y}\ge 0$. If we have $z = 0$, then $\boldsymbol{y} = 0$, which can never be optimal because $\b \ge 0$ and $\b \neq 0$. By construction, the matching satisfies $\x \in \bar \X$ with $\x = \boldsymbol{y} / z$. Chapter 4.3.2 of \cite{boyd2004convex} shows the equivalence between the linear fractional program and the linear programming formulation, which concludes the proof.
\end{proof}

\section{Cascade Model}\label{sec:cascade}

In this section, we consider the Cascade model. In particular, we focus on the design of efficient approximate algorithms for the WDP. In light of the reduction from Section \ref{subsec:approx-mech}, it suffices to show that the induced allocation function is monotonic.

For ease of notation, we take the values $\v$ as the ``bids'' of the WDP and hence the optimization target becomes the social welfare $\wel(\x, \sigma) = \sum_{i} v_i \cdot \pi_i(\x, \sigma)$. Without loss of generality, we assume $v_1 \geq v_2 \geq \cdots \geq v_n$.






The rest of the section is organized as follows. In Section \ref{subsec:perm}, we simplify the objective by fixing the optimal permutation. In Section \ref{sec:4apx}, we introduce an intermediate problem $\welr$ which 4-approximates our original WDP problem, and it is simpler in the sense that it avoids the cascading definition. In Section \ref{sec:ptas}, we give a PTAS for $\welr$ (Theorem \ref{thm:ptas_welr}) and this implies a constant approximation algorithm for WDP (Corollary \ref{cor:4welr}). Unfortunately, we cannot prove that it gives a monotonic allocation. Finally, in Section \ref{sec:polylog-approx}, we give an approximation algorithm to WDP with a monotonic allocation (Theorem \ref{thm:log-approx}), with a weaker approximation ratio $O(\ln m)$. As a corollary, 
\begin{corollary}\label{cor:cascade-mech}
  Under the Cascade model, we have an $O(\ln m)$-approximate, computationally efficient, $\epsilon$-IC, and IR auction for welfare maximization or revenue maximization, respectively.
\end{corollary}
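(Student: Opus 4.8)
The plan is to treat the corollary as a direct composition of Theorem~\ref{thm:log-approx} with the reduction machinery of Sections~\ref{subsec:mechanism} and~\ref{subsec:approx-mech}, so the proof is essentially a verification that both the $O(\ln m)$ guarantee and the monotonicity survive the passage from the WDP to the two objectives. I would begin by invoking Theorem~\ref{thm:log-approx} to obtain a polynomial-time algorithm $\A$ that, on any bid vector $\b$, returns an augmented allocation $\bchi^{\A}(\b)$ satisfying $\sum_i b_i \pi_i(\bchi^{\A}(\b)) \ge \frac{1}{O(\ln m)} \max_{\bchi} \sum_i b_i \pi_i(\bchi)$ and whose induced click probability $\pi_i(\bchi^{\A}(\b))$ is non-decreasing in $b_i$ with the other coordinates held fixed. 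All the genuine algorithmic content is thus imported; what remains is plumbing.

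For welfare maximization, I would feed the reported values $\v$ as bids to $\A$ and set $q(\v)$ to place all its mass on $\bchi^{\A}(\v)$. Since welfare is optimized pointwise (Section~\ref{subsec:mechanism}) and $\A$ is an $O(\ln m)$-approximation of $\max_{\bchi}\sum_i v_i\pi_i(\bchi)$ for every realization, taking expectation over $\v$ transfers the ratio directly to $\operatorname{Welfare}$. The monotonicity clause of Theorem~\ref{thm:log-approx} makes the total click probability $y_i(\v)=\pi_i(\bchi^{\A}(\v))$ non-decreasing in $v_i$, so the envelope formula~\eqref{eq:envelope} with base payment $t_i(0,\v_{-i})=0$ yields an IC and IR mechanism. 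For revenue maximization, I would instead run $\A$ on the virtual-value vector $\b=(\phi_i(v_i))_i$. Under the regularity assumption that each $\phi_i$ is non-decreasing, the composition $v_i \mapsto \pi_i(\bchi^{\A}(\phi(\v)))$ inherits monotonicity from the bid-monotonicity of $\A$, so $y_i$ is again non-decreasing in $v_i$ and~\eqref{eq:envelope} delivers IC and IR. Because $\operatorname{Revenue}=\E_{\v}[\sum_i \phi_i(v_i)y_i(\v)]$ while the optimal revenue equals $\E_{\v}[\max_{\bchi}\sum_i \phi_i(v_i)\pi_i(\bchi)]$, the same pointwise bound transfers the $O(\ln m)$ ratio to revenue. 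The $\epsilon$-IC qualifier arises exactly as in the MNL case: numerically evaluating $\int_0^{v_i} y_i(z,\v_{-i})\,\mathrm{d}z$ in~\eqref{eq:envelope} requires discretizing $[0,v_i]$ to keep the payment computation polynomial, which introduces an $\epsilon$ error into the incentive constraint.

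I expect the main obstacle to lie in the revenue direction rather than the welfare direction, for two reasons. First, virtual values can be negative, so I must argue that the multiplicative guarantee of $\A$ remains meaningful. Here $\max_{\bchi}\sum_i \phi_i(v_i)\pi_i(\bchi)$ is always nonnegative (the empty allocation attains $0$ and is feasible since $K\le m$ permits allocating fewer than $K$ positions), and an optimal virtual-surplus allocation never serves a bidder with $\phi_i(v_i)<0$. I would therefore restrict the candidate set passed to $\A$ to advertisers with $\phi_i(v_i)\ge 0$; this leaves the optimum unchanged, applies the approximation ratio to a nonnegative objective, and is consistent with monotonicity, since raising $v_i$ can only move $i$ from excluded to included as $\phi_i(v_i)$ crosses zero. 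Second, I must confirm that the monotonicity asserted in Theorem~\ref{thm:log-approx} holds over the relevant range of bids and, once composed with $\phi_i$, gives monotonicity in $v_i$; regularity of $F_i$ is precisely what closes this gap, and I would flag that irregular distributions would require an ironing step falling outside the stated result.

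\medskip
\noindent\emph{Remark.} The welfare half is essentially immediate once Theorem~\ref{thm:log-approx} is in hand, mirroring the VCG argument of Section~\ref{subsec:mechanism} but with the approximate, still-monotone allocation substituted for the exact one; the only substantive checks are the two revenue subtleties above and the bookkeeping that the pointwise approximation ratio commutes with the outer expectation over $\v$.
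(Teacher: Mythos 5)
Your proposal is correct and takes essentially the same route as the paper, which obtains the corollary by composing Theorem~\ref{thm:log-approx} (a monotone, polynomial-time, expected-$O(\ln m)$-approximate WDP solver) with the envelope-theorem reduction of Sections~\ref{subsec:mechanism} and~\ref{subsec:approx-mech}, the $\epsilon$-IC loss coming from discretizing the payment integral. The revenue-side details you supply (regularity of $\phi_i$, exclusion of bidders with negative virtual value) are left implicit in the paper, and the only slight imprecision on your side is that the guarantee of Theorem~\ref{thm:log-approx} holds in expectation over the algorithm's random bucket choice rather than pointwise, which changes nothing in the reduction since $y_i(\v)$ is already an interim quantity.
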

We leave it as an interesting open problem to close the gap between the approximation ratio of the WDP and the approximation ratio of the WDP with monotonic allocation.
\subsection{Optimal Permutation}\label{subsec:perm}

We first pin down the choice of the optimal permutation, which turns out to be agnostic to the implementation details of the allocation function. Therefore we can focus on the allocation design and simplify the notations.

The first thing to notice is that, when a slot $j$ is not allocated (that is, $\sum_{i\in[n]}x_{ij} = 0$), shuffling the slot up and down in the permutation does not change the welfare. Without loss of generality, we only consider permutations which put unallocated slots in the end, and call this set of permutations $\Sigma^*$. Next, for these permutations, we show a lemma about the permutation maximizing welfare.

\begin{lemma}\label{lem:cascade-perm}
For a fixed $\x \in \X$, any permutation $\sigma^* \in \Sigma^*$ maximizing $\wel(\x, \sigma)$ sorts slots in decreasing order of their matched values, i.e. for any $j$ and $j'$ with $\sum_{i\in[n]} v_i x_{ij} > \sum_{i \in [n]} v_i x_{ij'} > 0$, $\sigma^*(j) < \sigma^*(j')$.
\end{lemma}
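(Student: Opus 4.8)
The plan is to reduce the welfare to a one‑dimensional sequencing objective over the allocated slots and then run a standard adjacent‑transposition exchange argument. First I would rewrite the objective. For a fixed $\x$, each allocated slot $j$ is matched to a single advertiser, so I set the matched value $V_j = \sum_i v_i x_{ij}$ and the matched standalone click rate $P_j = \sum_i x_{ij} p_{ij}$. Substituting the cascade formula into $\wel(\x,\sigma) = \sum_i v_i \pi_i(\x,\sigma)$ and collecting terms by slot gives
\[
  \wel(\x,\sigma) = \sum_{j \text{ allocated}} V_j P_j \prod_{j' : \sigma(j') < \sigma(j)} (1 - P_{j'})\,,
\]
where unallocated slots contribute a trivial factor $1-0=1$ and, since $\sigma \in \Sigma^*$, sit after all allocated slots. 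Thus the only dependence on $\sigma$ is through the order of the allocated slots, each carrying a reward $V_j P_j$ and passing a survival factor $1-P_j$ to later slots.

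Next I would do the swap computation. Take two slots $a,b$ in consecutive positions under $\sigma$, with $a$ immediately before $b$, and let $C = \prod_{j' : \sigma(j') < \sigma(a)} (1 - P_{j'})$ be the survival factor inherited from the prefix. Swapping $a$ and $b$ leaves $C$ unchanged and leaves the product $(1-P_a)(1-P_b)$ passed to all later slots unchanged, so only the two terms for $a$ and $b$ move. A direct calculation gives the welfare in order $(a,b)$ minus the welfare in order $(b,a)$:
\[
  C\big(V_a P_a + (1-P_a) V_b P_b\big) - C\big(V_b P_b + (1-P_b) V_a P_a\big) = C\, P_a P_b\, (V_a - V_b)\,.
\]
Hence putting the higher‑value slot first is weakly better, and strictly better whenever $C, P_a, P_b > 0$ and $V_a > V_b$.

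Then I would conclude by an inversion argument. Suppose $\sigma^* \in \Sigma^*$ is optimal but violates the claim, so there are allocated slots with $V_j > V_{j'} > 0$ yet $\sigma^*(j) > \sigma^*(j')$. Reading the matched values in rendering order, the resulting sequence is then not non‑increasing, so some two slots in \emph{consecutive} positions have strictly out‑of‑order matched values (allocated slots are contiguous at the front, so consecutive allocated slots occupy consecutive positions). Swapping that adjacent pair strictly increases welfare by the formula above, contradicting optimality of $\sigma^*$. Therefore the matched‑value sequence of every optimal $\sigma^*$ is non‑increasing, which is exactly the stated sorting property.

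The step needing the most care is ensuring the swap is \emph{strict}, so the conclusion holds for \emph{every} maximizer rather than merely some maximizer; this requires $C>0$ and $P_a,P_b>0$ for the offending adjacent pair. Since the log‑odds transform $\rho_{ij} = \operatorname{logit}(p_{ij})$ used elsewhere presumes $p_{ij} \in (0,1)$, I would assume $0 < p_{ij} < 1$, which forces every allocated $P_j \in (0,1)$ and hence $C > 0$ automatically, making the gain $C P_a P_b (V_a - V_b)$ strictly positive. (Absent this assumption, slots with $P_j \in \{0,1\}$ are inert—they neither contribute reward nor discount later slots—so their position is irrelevant to $\wel$, and the universal ``any maximizing permutation'' statement would only hold up to reordering such degenerate slots.)
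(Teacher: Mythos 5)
Your proof is correct and takes essentially the same route as the paper's: an adjacent-transposition exchange argument that locates two consecutively placed slots with out-of-order matched values and shows the swap changes welfare by $C\,P_a P_b\,(V_a - V_b)$, contradicting optimality. The only difference is that you explicitly flag the positivity conditions ($P_a, P_b > 0$ and a nonzero prefix survival factor) needed for the strict inequality, a point the paper's proof asserts without comment; this is a minor but legitimate refinement rather than a different approach.
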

\begin{proof}
We prove this lemma by contradiction. Suppose there exist $j$ and $j'$ such that $\sigma^*(j) > \sigma^*(j')$ and $\sum_{i\in[n]} v_i x_{ij} >  \sum_{i \in [n]} v_i x_{ij'} > 0$. It is easy to see that we can find $j_1$ and $j_2$ with $\sigma^*(j') \leq \sigma^*(j_1) = \sigma^*(j_2) -1< \sigma^*(j)$, and $\sum_{i\in[n]} v_i x_{ij_1} < \sum_{i \in [n]} v_i x_{ij_2}$.

Now consider a different permutation $\sigma'$ which is the same as $\sigma^*$ except $\sigma'(j_1) =\sigma^*(j_2)$ and $\sigma'(j_2) = \sigma^*(j_1)$. Note that $\x \in \X$ is an integral solution, and let $i_1$ be the advertiser matched with $j_1$ and $i_2$ be the advertiser matched with $j_2$. 

We have $v_{i_1} < v_{i_2}$, and we now compare the welfare of two permutations:
\begin{align*}
&\wel(\x, \sigma^*) - \wel(\x, \sigma') \\
&= v_{i_1} \cdot (\pi_{i_1}(\x, \sigma^*) - \pi_{i_1}(\x, \sigma')) + v_{i_2} \cdot (\pi_{i_2}(\x, \sigma^*) - \pi_{i_2}(\x, \sigma')) \\
&=  \prod_{j': \sigma^*(j')<\sigma^*(j_1)} \left( 1 - \sum_{i' \in [n]}  x_{i'j'} p_{i'j'} \right) \left( v_{i_1} p_{i_1j_1} (1 - (1-p_{i_2j_2})) - v_{i_2} p_{i_2j_2}(1 - (1-p_{i_1j_1}))\right)\\
&= \prod_{j': \sigma^*(j')<\sigma^*(j_1)} \left( 1 - \sum_{i' \in [n]}  x_{i'j'} p_{i'j'} \right) p_{i_1j_1} p_{i_2j_2} (v_{i_1} - v_{i_2})\\
&<0.
\end{align*}
We get $\wel(\x, \sigma^*) < \wel(\x, \sigma')$ and this contradicts with $\sigma^*$ maximizing $\wel$.
\end{proof}

In the rest of the section, we will only consider the welfare-maximizing permutation which sorts its matched value in decreasing order (without loss of generality, we break ties of equal matched values by advertiser indices). And we omit $\sigma$ in notation $\pi$ and the corresponding $\wel$. We have
\[
    \pi_i(\x) = \sum_{j \in [m]} x_{ij} \cdot p_{ij} \cdot  \prod_{i' < i} \left(1- \sum_{j'\in[m]}x_{i'j'}p_{i'j'} \right)\,,
\]
and this implies
\[
    \pi_i(\x) = \sum_{j \in [m]} x_{ij} \cdot p_{ij} \cdot  \left(1-\sum_{i' < i} \pi_{i'}(\x)\right)\,.
\]


\subsection{4-approximating welfare via the restricted welfare}
\label{sec:4apx}

In this section, we introduce a restricted welfare problem $\welr$ and show it gives a 4-approximation to the WDP in the Cascade model. 

The restricted welfare is defined as follows. We define an auxiliary clickthrough rate $\pir$ that does not cascade and only counts welfare for top slots up to total click-through rate 1, in contrast to the click-through rate $\pi$. More formally,
\[
    \pir_i(\x) = \sum_{j \in [m]} x_{ij} \cdot\min \left( p_{ij},  1 - \sum_{i' < i} \pir_{i'}(\x) \right)\,.
\]

And we define the corresponding $\welr(\x)$ as $\wel(\x)$ with $\pi$ replaced by $\pir$. We first show a simple lemma to compare between $\pi(\x)$ and $\pir(\x)$:

\begin{lemma}
\label{lem:atmostone}
For any $n' = 0,...,n$,
\[
\sum_{i=1}^{n'} \pi_i(x) \leq \sum_{i=1}^{n'} \pir_i(\x) \leq 1,
\]
and $\pir_{i}(\x) \geq 0$ for $i \in [n]$.
\end{lemma}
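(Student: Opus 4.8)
The plan is to prove all three claims simultaneously by induction on $n'$, maintaining the single combined invariant $0 \le S_{n'} \le R_{n'} \le 1$, where I write $S_{n'} := \sum_{i=1}^{n'}\pi_i(\x)$ and $R_{n'} := \sum_{i=1}^{n'}\pir_i(\x)$ for the two partial sums. Since $\x \in \X$ is an integral matching, each advertiser $i$ is matched to at most one slot: if $i$ is unmatched then $\pi_i(\x) = \pir_i(\x) = 0$, and otherwise I denote by $p := p_{i\,j(i)} \in [0,1]$ the standalone click-through rate of the unique matched slot, so that $\pi_i(\x) = p\,(1 - S_{i-1})$ and $\pir_i(\x) = \min\!\big(p,\, 1 - R_{i-1}\big)$.

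The base case $n' = 0$ is immediate since $S_0 = R_0 = 0$. For the inductive step I assume $0 \le S_{n'-1} \le R_{n'-1} \le 1$ and first handle the upper bound together with nonnegativity of $\pir_{n'}(\x)$. If $n'$ is unmatched, both increments vanish and the invariant is inherited; if $n'$ is matched, then $1 - R_{n'-1} \ge 0$ already gives $\pir_{n'}(\x) = \min(p,\, 1 - R_{n'-1}) \ge 0$, and in either sub-case one checks directly that $R_{n'} = R_{n'-1} + \pir_{n'}(\x) \le 1$ (when $p \le 1 - R_{n'-1}$ the increment is $p \le 1 - R_{n'-1}$; when $p > 1 - R_{n'-1}$ the increment caps $R_{n'}$ at exactly $1$). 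Ranging over all $i$ this simultaneously establishes $\pir_i(\x) \ge 0$.

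The crux is the remaining inequality $S_{n'} \le R_{n'}$ in the matched case, which reduces to the elementary claim that
\[
 a + p\,(1-a) \;\le\; b + \min\!\big(p,\, 1 - b\big)
 \qquad\text{whenever } 0 \le a \le b \le 1 \text{ and } 0 \le p \le 1,
\]
taken with $a = S_{n'-1}$ and $b = R_{n'-1}$. I would split on the $\min$: when $p \ge 1-b$ the right-hand side equals $1$, while the left-hand side $a(1-p)+p \le (1-p)+p = 1$ since $a \le 1$ and $1-p \ge 0$; when $p < 1-b$ the right-hand side is $b + p$, so the inequality becomes $a(1-p) \le b$, which holds because $a(1-p) \le a \le b$. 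Nonnegativity $S_{n'} \ge 0$ is inherited similarly, since $1 - S_{n'-1} \ge 1 - R_{n'-1} \ge 0$ forces each $\pi_i(\x) \ge 0$.

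I expect the only real subtlety to be the bookkeeping, not any analytic depth: one must package $0 \le S \le R \le 1$ as one strengthened invariant so that each piece of the step has exactly the bound it consumes. The upper bound $R_{n'-1}\le 1$ is what makes the $\min$ be taken against a nonnegative quantity (hence $\pir_{n'}\ge 0$), and $a \le b \le 1$ together with $0\le p \le 1$ are precisely what the two sub-cases of the key inequality use. Isolating these dependencies cleanly in the induction hypothesis is the main care required.
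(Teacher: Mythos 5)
Your proof is correct and follows essentially the same route as the paper: induction on $n'$ maintaining $0 \le \sum_{i\le n'}\pi_i(\x) \le \sum_{i\le n'}\pir_i(\x) \le 1$, with a case split on whether the $\min$ in $\pir_{n'}$ is achieved by $1-R_{n'-1}$ (the paper phrases this as whether the restricted cumulative sum has reached $1$). Your packaging of the inductive step as the single elementary inequality $a + p(1-a) \le b + \min(p,\,1-b)$ is a slightly tidier presentation of the same two-case algebra.
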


\begin{proof}
We prove by induction on $n'$. The base case $n' = 0$ is trivial. 

Assume the claim is true for $n'=k$. Now consider the case $n'= k+1$. First of all, since $\sum_{i=1}^k \pir_i(\x) \leq 1$, by the definition of $\pir_{k+1}(\x)$, we know $\pir_{k+1}(\x) \geq 0$.

We also have
\begin{align*}
\sum_{i=1}^{k+1} \pir_i(\x) &= \sum_{i=1}^k \pir_i(\x) + \pir_{k+1}(\x)
\leq \sum_{i=1}^k \pir_i(\x) + \left(1 - \sum_{i=1}^k \pir_i(\x)\right) = 1\,.
\end{align*}

For $\sum_{i=1}^{k+1} \pir_i(\x)$  versus $\sum_{i=1}^{k+1} \pi_i(\x)$, there are two cases:
\begin{itemize}
    \item Case 1: $\sum_{i=1}^{k+1} \pir_i(\x) = 1$. In this case, we just need to show $\sum_{i=1}^{k+1} \pi_i(\x) \leq 1$. Since $1-\sum_{i=1}^k \pi_i(\x) \geq 0$, by definition, we have $\pi_{k+1}(\x) \leq 1 \cdot (1-\sum_{i=1}^k \pi_i(\x))$. And this implies $\sum_{i=1}^{k+1} \pi_i(\x) \leq 1$.
    \item Case 2: $\sum_{i=1}^{k+1} \pir_i(\x) < 1$. In this case, we know $\pir_{k+1}(\x) = \sum_{j\in [m]} x_{ij} \cdot p_{ij} \geq \pi_{k+1}(\x)$. By induction, we also have $\sum_{i=1}^k \pi_i(\x) \leq \sum_{i=1}^k \pir_i(\x)$. Put them together, we get $\sum_{i=1}^{k+1} \pi_i(\x) \leq \sum_{i=1}^{k+1} \pir_i(\x)$.
\end{itemize}
\end{proof}

We now prove in Lemma \ref{lem:4ap_lb} and Lemma \ref{lem:4ap_ub} that $\welr(\x)$ is a 4-approximation of $\wel(\x)$.

\begin{lemma}
For any $x\in \X$,
\label{lem:4ap_ub}
\[
\welr(\x) \geq \wel(\x).
\]
\end{lemma}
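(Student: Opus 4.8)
The plan is to reduce the inequality $\welr(\x) \ge \wel(\x)$ to a comparison of two weighted sums whose weights $v_1 \ge v_2 \ge \cdots \ge v_n \ge 0$ are nonincreasing, and then to exploit the partial-sum domination already established in Lemma~\ref{lem:atmostone}. Writing out the definitions of $\welr$ and $\wel$, the claim is exactly $\sum_{i=1}^n v_i \pir_i(\x) \ge \sum_{i=1}^n v_i \pi_i(\x)$, so the structural content is a majorization-type statement: larger prefix sums paired with decreasing weights produce a larger total.

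First I would introduce the partial sums $S_{n'} = \sum_{i=1}^{n'} \pi_i(\x)$ and $\bar S_{n'} = \sum_{i=1}^{n'} \pir_i(\x)$, with the convention $S_0 = \bar S_0 = 0$. Lemma~\ref{lem:atmostone} gives $\bar S_{n'} \ge S_{n'}$ for every $n' \in \{0,1,\dots,n\}$, and this is the only fact about the click-through rates that I need. Note that the exact cascading form of $\pi$ and the $\min$-truncation in $\pir$ play no further role once this prefix-domination is in hand.

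Next I would apply summation by parts (Abel's identity). Writing $\pi_i(\x) = S_i - S_{i-1}$, one gets $\sum_{i=1}^n v_i \pi_i(\x) = \sum_{i=1}^{n-1} (v_i - v_{i+1}) S_i + v_n S_n$, and the analogous identity holds for $\pir$ with $\bar S$ in place of $S$. Subtracting the two expressions yields $\welr(\x) - \wel(\x) = \sum_{i=1}^{n-1} (v_i - v_{i+1})(\bar S_i - S_i) + v_n(\bar S_n - S_n)$. Here every coefficient $v_i - v_{i+1}$ is nonnegative because the values are sorted in decreasing order, $v_n \ge 0$, and each gap $\bar S_i - S_i$ is nonnegative by the previous step; hence the whole expression is nonnegative, which is precisely the desired inequality.

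I do not anticipate a genuine obstacle, since the real work has already been carried out in Lemma~\ref{lem:atmostone}. The only point requiring a little care is the bookkeeping of the boundary term $v_n(\bar S_n - S_n)$ in the Abel summation, so that no contribution is silently dropped; once the signs of all coefficients are confirmed to be nonnegative, the conclusion follows immediately.
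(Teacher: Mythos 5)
Your proof is correct, but it takes a genuinely different route from the paper's. You reduce everything to the prefix-sum domination $\sum_{i=1}^{n'}\pi_i(\x)\le\sum_{i=1}^{n'}\pir_i(\x)$ from Lemma~\ref{lem:atmostone} and then apply Abel summation, using only that the sorted values give nonnegative coefficients $v_i-v_{i+1}$ and $v_n$. The paper instead splits the sum at a threshold index $s$ (the last index whose cumulative $\pir$ is at most $1$), asserts the \emph{termwise} bound $\pir_i(\x)\ge\pi_i(\x)$ for $i\le s$, and then controls the tail $\sum_{i>s}v_i\pi_i(\x)$ by $v_{s+1}$ times a mass difference. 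Your argument is cleaner and arguably more robust: the termwise comparison the paper leans on is delicate (for an individual index $i$ where the $\min$-truncation has fully bound, one can have $\pir_i(\x)<\pi_i(\x)$ even though all prefix sums still satisfy the inequality), whereas the prefix-sum statement you invoke is exactly what Lemma~\ref{lem:atmostone} proves and is all that is needed once the weights are monotone. It is worth noting that your summation-by-parts device is precisely the one the paper itself deploys later in the proof of Lemma~\ref{lem:bka}, where $\welbka(\x)=\sum_{\ell}(v_\ell-v_{\ell+1})\sum_{i\le\ell}\pibka_i(\x)$ is compared against $\welr(\x)$ via prefix sums; so your proof unifies the two lemmas under one technique, at the cost of being slightly less self-contained than the paper's direct manipulation. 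The only bookkeeping to confirm is that the indexing of advertisers in $\pi$ and $\pir$ (after fixing the optimal permutation of Lemma~\ref{lem:cascade-perm}) agrees with the decreasing order $v_1\ge\cdots\ge v_n$ assumed at the start of Section~\ref{sec:cascade}, which it does.
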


\begin{proof}
Let $s$ be the last item with cumulative non-cascade click-through rates at most 1, i.e. we have $\sum_{i=1}^s \pir_i(\x) \leq 1$
and, either $s = n$ or $\sum_{i=1}^{s+1} \pir_i(\x) > 1$.
And it is easy to see that by definition for $i \in [s]$, $\pir_i(\x) \geq \pi_i(\x)$. For notation convenience, define $v_{n+1} = 0$ (it's for the case when we need to use $v_{s+1}$ and $s=n$). 

\begin{align*}
\wel(\x) &= \sum_{i\in[n]}v_i \pi_i(\x)
= \sum_{i=1}^s v_i \pi_i(\x)  + \sum_{i=s+1}^n v_i \pi_i(\x) \\
&\leq \sum_{i=1}^s v_i \pir_i(\x) - \sum_{i=1}^s v_{s+1}(\pir_i(\x) - \pi_i(\x)) +   \sum_{i=s+1}^n v_{s+1} \pi_i(\x) \\
&\leq \sum_{i=1}^s v_i \pir_i(\x) + v_{s+1}  \cdot \left( \sum_{i \in [n]} \pi_i(\x) - \sum_{i=1}^s \pir_i(\x)\right) \\
&\leq \sum_{i=1}^s v_i \pir_i(\x) + v_{s+1}  \cdot \left( \sum_{i\in[n]} \pir_i(\x) - \sum_{i=1}^s \pir_i(\x)\right) \\
&\leq \welr(\x).
\end{align*}
\end{proof}

\begin{lemma}
For any $\x \in \X$,
\label{lem:4ap_lb}
\[
\welr(\x) / 4 \leq \wel(\x).
\]
\end{lemma}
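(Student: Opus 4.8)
The idea is to reduce the two-sided welfare comparison to a single scalar inequality by summation by parts, taking advantage of the fact that values are sorted. Write $q_i := \sum_{j\in[m]} x_{ij} p_{ij}$ for the standalone click-through rate of the (at most one) pair selected in row $i$; note $q_i \in [0,1]$ because $\x$ is an integral matching. Introduce the prefix sums $S_k := \sum_{i=1}^k \pi_i(\x)$ and $R_k := \sum_{i=1}^k \pir_i(\x)$, with $S_0 = R_0 = 0$. Using the convention $v_{n+1} = 0$ and Abel summation, both welfares become weighted sums of these prefix sums with nonnegative weights:
\[
\wel(\x) = \sum_{k=1}^n (v_k - v_{k+1})\, S_k, \qquad \welr(\x) = \sum_{k=1}^n (v_k - v_{k+1})\, R_k,
\]
where $v_k - v_{k+1}\ge 0$ by the sorting assumption $v_1 \ge \cdots \ge v_n \ge v_{n+1} = 0$. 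Consequently it suffices to prove the pointwise bound $R_k \le 4 S_k$ for every $k$, which then yields $\welr(\x) \le 4\,\wel(\x)$ term by term.

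The second step is to obtain closed forms for the prefix sums by unrolling the two recursions. From $\pi_i(\x) = q_i\,(1 - S_{i-1})$ one gets $1 - S_i = (1-q_i)(1-S_{i-1})$, hence $S_k = 1 - \prod_{i=1}^k (1-q_i)$. From $\pir_i(\x) = \min\!\big(q_i,\, 1 - R_{i-1}\big)$ (which holds uniformly, including for unmatched rows where $q_i=0$) one gets $R_i = \min(R_{i-1} + q_i, 1)$, hence $R_k = \min\!\big(\sum_{i=1}^k q_i,\, 1\big)$. Both are immediate inductions, the latter leaning on $R_{i-1}\le 1$ from Lemma~\ref{lem:atmostone}.

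The core step, and the only place with real content, is the scalar estimate
\[
\min\Big(\sum_{i=1}^k q_i,\, 1\Big) \;\le\; 4\Big(1 - \prod_{i=1}^k (1 - q_i)\Big), \qquad q_i \in [0,1].
\]
I would prove it using $\prod_{i=1}^k (1-q_i) \le \exp(-Q)$ with $Q := \sum_{i=1}^k q_i$, reducing everything to a one-variable inequality in $Q$. If $Q \le 1$, the elementary fact $1 - e^{-Q} \ge Q/2$ on $[0,1]$ gives $\min(Q,1) = Q \le 2(1 - e^{-Q}) \le 4\big(1 - \prod_i(1-q_i)\big)$. If $Q > 1$, then $1 - \prod_i(1-q_i) \ge 1 - e^{-Q} > 1 - e^{-1} > 1/4$, so $\min(Q,1) = 1 \le 4\big(1 - \prod_i(1-q_i)\big)$. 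Either way the pointwise bound $R_k \le 4 S_k$ holds, completing the argument.

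I expect no deep computational obstacle here; the decisive insight is simply recognizing that the cascade prefix sums collapse to a product $1 - \prod_i(1-q_i)$ while the restricted prefix sums collapse to a truncated sum $\min(\sum_i q_i, 1)$, after which the whole statement is an elementary comparison of $1 - e^{-Q}$ against $\min(Q,1)$. The summation-by-parts reduction is what makes the pointwise comparison sufficient, and it relies on the values being sorted so that the Abel weights $v_k - v_{k+1}$ are nonnegative.
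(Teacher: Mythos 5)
Your proof is correct, but it takes a genuinely different route from the paper's. The paper picks the threshold index $s$ at which the cumulative restricted click-through rate first exceeds $1/2$, argues that the prefix $[1,s+1]$ captures at least half of $\welr(\x)$ (using the sortedness and nonnegativity of the values), and that on this prefix the cascade discount factor $1-\sum_{i'<i}\pi_{i'}(\x)$ is still at least $1/2$ so $\pir_i(\x)\le 2\pi_i(\x)$; the two factors of $2$ multiply to give $4$. You instead use Abel summation to reduce everything to a pointwise comparison of prefix sums (also relying on sortedness and nonnegativity for the weights $v_k-v_{k+1}\ge 0$), then observe that the cascade recursion collapses to $S_k=1-\prod_{i\le k}(1-q_i)\ge 1-e^{-Q_k}$ while the restricted recursion collapses to $R_k=\min(Q_k,1)$, and finish with an elementary one-variable inequality. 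Your route is arguably cleaner and, as a bonus, it actually proves the sharper bound $\welr(\x)\le \frac{e}{e-1}\,\wel(\x)$ (the maximum of $\min(Q,1)/(1-e^{-Q})$ is attained at $Q=1$), which would improve the paper's overall approximation factor in Corollary~\ref{cor:4welr} if propagated; the paper's threshold argument is looser but requires no calculus. All the individual steps you use check out: the uniform identity $\pir_i(\x)=\min(q_i,\,1-R_{i-1})$ is legitimate for integral matchings because $1-R_{i-1}\ge 0$ by Lemma~\ref{lem:atmostone}, and $1-e^{-Q}\ge Q/2$ on $[0,1]$ holds since the difference vanishes at $0$ and is concave-then-decreasing with value $1/2-e^{-1}>0$ at $Q=1$.
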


\begin{proof}
Let $s$ be the last item with cumulative non-cascade click-through rates at most 1/2, i.e. we have
\[
\sum_{i=1}^s \pir_i(\x) \leq 1/2
\]
and, either $s = n$ or
\[
\sum_{i=1}^{s+1} \pir_i(\x) > 1/2.
\]
And it is easy to see that by definition for $i \in [\min(n, s+1)]$, $\pir_i(\x) \leq 2 \cdot \pi_i(\x)$. For notation convenience, define $v_i = 0$ for $i > n$.

If $s=n$, we simply have
\[
\sum_{i=1}^{s+1} v_i \pir_i(\x)  =\welr(\x) \geq \welr(\x)/2.
\]

If $s < n$, we have
\[
\sum_{i=1}^{s+1} v_i \pir_i(\x) \geq v_{s+1}  \cdot \frac{1}{2} \geq \sum_{i=s+2}^n v_i \pir_i(\x) = \welr(\x) - \sum_{i=1}^{s+1} v_i \pir_i(\x).
\]
And we get $\sum_{i=1}^{s+1} v_i \pir_i(\x)  \geq \welr(\x)/2$. 

Then we have,
\begin{align*}
\wel(\x)  &= \sum_{i\in[n]}v_i \pi_i(\x)
\geq \sum_{i=1}^{s+1} v_i \pi_i(\x)
\geq \sum_{i=1}^{s+1} v_i \pir_i(\x) /2 
\geq \welr(\x) / 4.
\end{align*}
\end{proof}

With Lemma \ref{lem:4ap_lb} and Lemma \ref{lem:4ap_ub}, we know that if we optimize $\welr(\x)$, we get a 4-approximation to $\wel(\x)$. In the next section, we show an algorithm to (approximately) optimize $\welr(\x)$.

\subsection{A PTAS for $\welr$}
\label{sec:ptas}
We show a PTAS for $\welr$ in this section. We are going to use an algorithm from \cite{10.5555/1788814.1788839} for the budgeted matching problem which is similar to the restricted welfare problem. The budgeted matching problem can be defined as the following using our paper's notation for our use case: we first define $\pib_i(\x) =  \sum_{j \in [m]} x_{ij}  p_{ij}$, and we define the corresponding $\welb(\x)$ as $\wel(\x)$ with $\pi$ replaced by $\pib$. The goal of the budgeted matching problem is to find an integer solution $\x$ maximizing $\welb(\x)$ subject to $\sum_{i \in [n]} \pib_i(\x) \leq 1$.

\begin{theorem}[Theorem 1 of \cite{10.5555/1788814.1788839}]
\label{thm:alg-bm}
There is a PTAS for the budgeted matching problem.
\end{theorem}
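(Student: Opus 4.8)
The plan is to follow the Lagrangian-relaxation-plus-patching strategy that is standard for budgeted optimization over combinatorial structures admitting efficient exact linear optimization. Observe first that $\welb$ is linear: writing $c_{ij} = v_i p_{ij}$ for the profit and $w_{ij} = p_{ij}$ for the weight of assigning advertiser $i$ to position $j$, the budgeted matching problem asks for a matching maximizing $\sum_{ij} c_{ij} x_{ij}$ subject to the single knapsack constraint $\sum_{ij} w_{ij} x_{ij} \le 1$. Without this constraint the problem is an ordinary maximum-weight bipartite matching, solvable exactly in polynomial time, so the only obstruction to tractability is the budget.

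First I would perform a guessing step to ensure that no single edge carries too much profit. Let $\OPT$ denote an optimal budgeted matching and fix a target $\eps$. Guess the set $H$ of the $h = \lceil 1/\eps \rceil$ highest-profit edges used by $\OPT$; there are at most $(nm)^{h}$ candidate sets, which is polynomial for fixed $\eps$. Fix these edges in the solution, delete their endpoints, subtract their weight from the budget, and restrict attention to edges of profit at most $\min_{e \in H} c_e$. Since the total profit of the $h$ guessed edges is at most $\welb(\OPT)$, their minimum is at most $\welb(\OPT)/h$, so every remaining edge has profit at most $\welb(\OPT)/h \le \eps \cdot \welb(\OPT)$; that is, after guessing, every still-available edge is cheap in profit.

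Next I would relax the residual budget constraint with a Lagrange multiplier $\lambda \ge 0$ and solve $\max \sum_{ij}(c_{ij} - \lambda w_{ij}) x_{ij}$, which is again an ordinary maximum-weight matching. As $\lambda$ increases from $0$, the weight of the optimal Lagrangian matching decreases; by a parametric/binary search I would locate the breakpoint $\lambda^\star$ at which two optimal Lagrangian matchings $M_{>}$ and $M_{\le}$ coexist, one just over and one at or under the residual budget. Standard LP-duality then shows that a convex combination of their profits is at least the optimum of the budget-constrained LP relaxation, hence at least $\welb(\OPT)$ on the residual instance.

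The main obstacle, and the crux of the argument, is to turn the two matchings $M_{>}$ and $M_{\le}$ into a single feasible matching without losing much profit --- matchings do not form a matroid, so one cannot simply round within a polytope. The plan is to examine the symmetric difference $M_{>} \triangle M_{\le}$, which decomposes into vertex-disjoint alternating paths and cycles; on each component one may independently take either the $M_{>}$-edges or the $M_{\le}$-edges and still obtain a matching. Ordering the components and sweeping (the ``gasoline puzzle'' exchange argument of \cite{10.5555/1788814.1788839}) lets me choose one side per component so that the total weight stays within the residual budget while discarding the edges of at most one component. Because every residual edge now has profit at most $\eps \cdot \welb(\OPT)$, the discarded component costs at most $O(\eps)\cdot\welb(\OPT)$ in profit. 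Combining with the guessed set $H$ and taking the best solution over all guesses yields a feasible matching of profit at least $(1-O(\eps))\,\welb(\OPT)$ in time $(nm)^{O(1/\eps)}$, establishing the PTAS.
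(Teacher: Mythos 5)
The paper does not prove this statement at all: Theorem~\ref{thm:alg-bm} is imported as a black box, being literally Theorem~1 of \cite{10.5555/1788814.1788839}, so there is no internal proof to compare against. What you have written is a reconstruction of the cited paper's own argument, and its architecture is indeed the right one --- guess the $O(1/\eps)$ most profitable edges of $\OPT$ so that all remaining edges have profit at most $\eps\cdot\welb(\OPT)$, Lagrangify the single budget constraint to reduce to plain maximum-weight matching, locate the breakpoint $\lambda^\star$ with two optimal Lagrangian matchings straddling the budget, and patch them along the components of their symmetric difference.

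There is, however, a genuine gap in your final accounting, and it sits exactly where the ``gasoline puzzle'' is needed. A single alternating path or cycle of $M_{>}\triangle M_{\le}$ can contain $\Theta(\min(n,m))$ edges, so ``every residual edge has profit at most $\eps\cdot\welb(\OPT)$'' does \emph{not} bound the profit of a discarded component by $O(\eps)\cdot\welb(\OPT)$; per-edge cheapness only controls per-edge losses. The published argument avoids this by never discarding a whole component: component swaps are Lagrangian-neutral (so a hybrid's profit is determined by how close its weight gets to the budget), the gasoline-puzzle ordering controls the partial weight sums, and the budget-crossing component is only \emph{partially} exchanged along the path, which breaks the matching property at $O(1)$ boundary vertices and hence costs only $O(1)\cdot p_{\max}\le O(\eps)\cdot\welb(\OPT)$ after the guessing step. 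Since you both mis-state this step and simultaneously defer it to the citation, the proposal is not a self-contained proof; as a summary of why the cited theorem holds it is serviceable, but the component-discarding bound as written would fail.
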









Here we give a PTAS algorithm for $\welr$. The main idea of the algorithm is to guess the discounted advertiser (as later defined in Definition \ref{def:disct}) and how much it is discounted, and then apply the PTAS of the budget matching problem. 

\begin{algorithm}[H]
\caption{PTAS for $\welr$}
\label{alg:ptas_welr}
\DontPrintSemicolon
\KwIn{Approximation parameter $\varepsilon \in (0,1)$, number of advertisers $n$, number of positions $m$, values $v_i$ for $i \in [n]$, base click-through rates $p_{ij}$ for $i \in [n], j \in [m]$}
\KwOut{Allocation $\x$}
Initialize $\x$ to be an arbitrary allocation\;
\For{ $k = 1,...,n$}{
    \For{$\alpha = \varepsilon/2, \varepsilon, ..., \lfloor 2/\varepsilon \rfloor \cdot \varepsilon/2, 1$}{
        Define $\pka_{ij} = p_{ij}$ for $i\in [n]\backslash \{k\}, j\in[m]$, and $\pka_{ij} = p_{ij} \cdot \alpha$ for $i = k, j\in[m]$\;
        Solve the budget matching with $\pka$ as base click-through rates ($\welbka$)  using the PTAS from \cite{10.5555/1788814.1788839} to get a $(1-\varepsilon/2)$-approximation $\xka$ \;
        \If{$\welr(\xka ) \geq \welr(\x)$}{
            $\x \leftarrow \xka$\;
        }
    }
}
Output $\x$\;
\end{algorithm}

Before we analyze Algorithm \ref{alg:ptas_welr}, we first prove several utility notations and lemmas.

\begin{definition}
\label{def:disct}
For any $\x \in \X$, call an advertiser $k$ a discounted advertiser if $\pir_k(\x) > 0$ and $\pir_k(\x) < \pib_k(\x) = \sum_{j \in [m]} x_{kj}  p_{kj}$.
\end{definition}

\begin{lemma}
\label{lem:disct}
For any $\x \in \X$, there exists at most one discounted advertiser.
\end{lemma}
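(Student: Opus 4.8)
The plan is to show that the very first advertiser that gets discounted already exhausts the total restricted click-through budget, forcing every later advertiser to a zero restricted click-through rate, so no second discounted advertiser can exist. The whole argument hinges on reading off the behavior of the $\min$ in the definition of $\pir$ and then invoking the prefix-sum bound from Lemma \ref{lem:atmostone}.

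First I would unpack the definition using feasibility of $\x \in \X$. Since each advertiser occupies at most one position, an unmatched advertiser $i$ has $\pir_i(\x) = \pib_i(\x) = 0$ and is therefore never discounted by Definition \ref{def:disct}. For a matched advertiser $i$, say to its unique position $j_i$, the definition collapses to $\pir_i(\x) = \min\bigl(p_{ij_i},\, 1 - \sum_{i' < i} \pir_{i'}(\x)\bigr)$ while $\pib_i(\x) = p_{ij_i}$. Thus the discount inequality $\pir_i(\x) < \pib_i(\x)$ is exactly the statement that the capacity term, rather than $p_{ij_i}$, attains the minimum.

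Next I would characterize a discounted advertiser $k$. The condition $\pir_k(\x) < \pib_k(\x)$ forces $\pir_k(\x) = 1 - \sum_{i' < k} \pir_{i'}(\x)$, and combined with the requirement $\pir_k(\x) > 0$ this gives the key equality $\sum_{i' \le k} \pir_{i'}(\x) = 1$. Now I invoke Lemma \ref{lem:atmostone}, which guarantees both $\pir_i(\x) \ge 0$ for every $i$ and $\sum_{i=1}^{n'} \pir_i(\x) \le 1$ for every prefix length $n'$. Since the prefix sum already equals $1$ at index $k$ and the remaining terms are nonnegative while every longer prefix sum stays at most $1$, we are forced to have $\pir_i(\x) = 0$ for all $i > k$. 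Hence no advertiser with index larger than $k$ satisfies $\pir_i(\x) > 0$, so none of them can be discounted. Applying this to the smallest-index discounted advertiser rules out every other candidate and establishes uniqueness.

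I do not expect a genuinely hard step here; the only place requiring care is the case analysis on the $\min$ — correctly arguing that the discount condition is equivalent to the capacity term binding, and that $\pir_k(\x) > 0$ upgrades the resulting prefix inequality to the exact equality $\sum_{i' \le k}\pir_{i'}(\x) = 1$. Once that equality is in hand, the nonnegativity and prefix-sum bounds of Lemma \ref{lem:atmostone} immediately pin the entire tail to zero and close the argument.
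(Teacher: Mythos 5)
Your proposal is correct and follows essentially the same route as the paper: both arguments observe that a discounted advertiser $k$ forces $\pir_k(\x) = 1 - \sum_{i'<k}\pir_{i'}(\x)$ and hence $\sum_{i'\le k}\pir_{i'}(\x)=1$, and then use the nonnegativity and prefix-sum bounds of Lemma~\ref{lem:atmostone} to conclude that every later advertiser has $\pir_i(\x)=0$ and so cannot be discounted. (One tiny remark: the equality $\sum_{i'\le k}\pir_{i'}(\x)=1$ already follows from the capacity term binding, without needing $\pir_k(\x)>0$.)
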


\begin{proof}
We prove by contradiction. Suppose both $k$ and $k'$ satisfy the condition for being a discounted advertiser. Wlog let $k < k'$. Since $\pir_k(\x) < \sum_{j \in [m]} x_{kj}  p_{kj}$, by definition, we know $\pir_k(\x) = 1 - \sum_{i=1}^{k-1} \pir_i(\x)$. So $\sum_{i=1}^k \pir_i(\x) = 1$. Now we have $\pir_{k'} \leq 1 - \sum_{i=1}^{k'-1} \pir_i(\x) \leq 1 -\sum_{i=1}^k \pir_i(\x) = 0$, and this gives a contradiction.
\end{proof}

\begin{definition}
\label{def:zeroes}    
For any $\x \in \X$, define $\zs(\x)$ to be the procedure of forcing advertiser $i$'s allocation to be 0 if $\pir_i(\x) = 0$, i.e.
\begin{itemize}
\item $\zs(\x)_{ij} = \x_{ij}$ if $\pir_i(\x) > 0$,
\item and $\zs(\x)_{ij} = 0$ if $\pir_i(\x) = 0$.
\end{itemize}
\end{definition}

\begin{lemma}
\label{lem:zeroes}
For any $\x \in \X$, $\x' = \zs(\x)$ has the following two nice properties:
\begin{itemize}
\item $\pir_i(\x) = \pir_i(\x')$ for any $i\in[n]$, and $\welr(\x) = \welr(\x')$
\item For any advertiser $i$ that is not a discounted advertiser in $\x$, $\pir_i(\x') = \pib_i(\x')$.
\end{itemize}
\end{lemma}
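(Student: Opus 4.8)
The plan is to establish the first bullet by induction on the advertiser index and then obtain the second as an immediate case split built on top of it. The whole argument rests on one structural fact: $\zs$ modifies only those rows $i$ with $\pir_i(\x)=0$, and such rows contribute nothing to the recursive definition of $\pir$, so zeroing them out is invisible to the recursion.

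For the first property, I would prove $\pir_i(\x)=\pir_i(\x')$ for every $i\in[n]$ by strong induction, carrying along the statement that the prefix sums agree, $\sum_{i'<i}\pir_{i'}(\x)=\sum_{i'<i}\pir_{i'}(\x')$. This framing is the right one because $\pir_i$ depends on $\x$ only through its own row $(x_{ij})_j$ and through the cumulative quantity $\sum_{i'<i}\pir_{i'}$. I would then split on the sign of $\pir_i(\x)$: if $\pir_i(\x)>0$ then by the definition of $\zs$ the row is preserved, $x'_{ij}=x_{ij}$, and the matching prefix sums from the hypothesis feed into the recursion to give $\pir_i(\x')=\pir_i(\x)$; if $\pir_i(\x)=0$ then $\zs$ zeros the row, so $\pir_i(\x')=0=\pir_i(\x)$. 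Either way the prefix sum is preserved through index $i$, closing the induction, and $\welr(\x)=\sum_i v_i\pir_i(\x)=\welr(\x')$ follows at once.

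For the second property, I would first record the elementary inequality $\pir_i(\x)\le\pib_i(\x)$, which holds because each summand of $\pir_i$ caps $p_{ij}$ from above via a $\min$. Consequently, an advertiser that fails the discounted condition of Definition \ref{def:disct} must satisfy either $\pir_i(\x)=0$ or $\pir_i(\x)=\pib_i(\x)$. In the former case $\zs$ zeros row $i$, so both $\pir_i(\x')$ and $\pib_i(\x')$ vanish and are trivially equal. In the latter case $\pir_i(\x)>0$ forces $\zs$ to preserve the row, whence $\pib_i(\x')=\pib_i(\x)$; combining with the first property gives $\pir_i(\x')=\pir_i(\x)=\pib_i(\x)=\pib_i(\x')$.

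No step constitutes a genuine obstacle; the only mild subtlety is to phrase the induction in terms of prefix sums rather than individual $\pir_i$ values, since the recursive definition couples each advertiser to the cumulative click-through rate of all earlier ones. Everything else is bookkeeping that follows from the observation that $\zs$ touches only rows already contributing nothing to $\pir$.
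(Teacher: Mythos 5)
Your proposal is correct and follows essentially the same route as the paper: induction on $i$ for the first bullet (the paper leaves the induction details as "straightforward to check," and your prefix-sum formulation is exactly the right way to fill them in), followed by the same two-case split ($\pir_i(\x)=0$ versus $0<\pir_i(\x)=\pib_i(\x)$) for the second bullet.
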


\begin{proof}
First of all, $\pir_i(\x) = \pir_i(\x')$ is straightforward to check by the definition using an induction on $i$. Then we have
\[
\welr(\x) = \sum_{i=1}^n v_i \pir_i(x) = \sum_{i=1}^n v_i \pir_i(\x') = \welr(\x').
\]

For the second part of the lemma's claim, let us consider any advertiser $i$ that is not a discounted advertiser in $\x$. By definition, we know either $\pir_i(\x) =0 $ or $0 < \pir_i(\x) = \pib_i(\x)$.

In the first case where $\pir_i(\x) =0$, we know $\x'_{ij} = 0$ for $j\in [m]$, and therefore $\pib_i(\x') = 0 = \pir_i(\x) = \pir_i(\x')$.

In the second case where $0 < \pir_i(\x) = \pib_i(\x)$, we know $\x'_{ij} = \x_{ij}$ for $j \in [m]$, therefore $\pib_i(\x') = \pib_i(\x) = \pir_i(\x) = \pir_i(\x')$.
\end{proof} 

\begin{lemma}
\label{lem:bka}
For any $\x \in \X$ satisfying $\sum_{i=1}^n \pibka_i(\x) \leq 1$, we have
\[
\welbka(\x) \leq \welr(\x).
\]
\end{lemma}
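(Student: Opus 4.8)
The plan is to prove the inequality by a summation-by-parts (Abel) argument that reduces the claim to a comparison of prefix sums. Write $\welbka(\x) = \sum_{i=1}^n v_i \pibka_i(\x)$ and $\welr(\x) = \sum_{i=1}^n v_i \pir_i(\x)$, and introduce the prefix sums $A_t = \sum_{i=1}^t \pibka_i(\x)$ and $S_t = \sum_{i=1}^t \pir_i(\x)$, with $A_0 = S_0 = 0$ and the convention $v_{n+1} = 0$. Summation by parts gives
\[
\welbka(\x) = \sum_{t=1}^n (v_t - v_{t+1})\, A_t, \qquad \welr(\x) = \sum_{t=1}^n (v_t - v_{t+1})\, S_t.
\]
Since the values are sorted as $v_1 \ge \cdots \ge v_n \ge 0 = v_{n+1}$, every coefficient $v_t - v_{t+1}$ is nonnegative, so it suffices to establish the prefix dominance $A_t \le S_t$ for every $t \in \{1,\dots,n\}$.

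To evaluate $S_t$, I would exploit that $\x$ is an integral matching, so each advertiser occupies at most one position and the recursion collapses to $\pir_i(\x) = \min\!\big(\pib_i(\x),\, 1 - S_{i-1}\big)$. Writing $P_t = \sum_{i=1}^t \pib_i(\x)$, a short induction on $t$ then shows $S_t = \min(P_t, 1)$: while $P_{t-1} < 1$ the truncating minimum is inactive and $S$ tracks $P$ exactly, and once $P$ first reaches $1$ the minimum pins $S_t$ at $1$ and forces all later $\pir_i(\x)$ to vanish.

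For the left-hand side, note that $\pibka_i(\x) = \pib_i(\x)$ for $i \ne k$ while $\pibka_k(\x) = \alpha\,\pib_k(\x)$ with $\alpha \in (0,1]$, so $A_t = P_t - (1-\alpha)\,\pib_k(\x)\,\mathbf{1}[k \le t] \le P_t$ by nonnegativity of $\pib_k(\x)$. Moreover each $\pibka_i(\x) \ge 0$, so the prefix sums $A_t$ are nondecreasing, whence $A_t \le A_n = \sum_{i} \pibka_i(\x) \le 1$ by the budget hypothesis. Combining the two bounds yields $A_t \le \min(P_t, 1) = S_t$, which feeds back into the summation-by-parts identity and completes the proof.

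The one point I would flag as the crux is recognizing that the prefix dominance $A_t \le S_t$ rests on two genuinely distinct reasons: the bound $A_t \le P_t$ comes purely from the discount $\alpha \le 1$, whereas the bound $A_t \le 1$ is exactly where the hypothesis $\sum_i \pibka_i(\x) \le 1$ enters, promoted from the total sum to every prefix by monotonicity of partial sums of nonnegative terms. The remaining ingredients — the closed form $S_t = \min(P_t,1)$ for integral matchings and the Abel rearrangement using the sorted values — are routine, so the only care needed is in lining up these two inequalities against the truncation structure of $\pir$.
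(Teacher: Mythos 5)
Your proof is correct and follows essentially the same route as the paper's: both reduce the claim via Abel summation (with $v_{n+1}=0$ and the sorted values giving nonnegative coefficients) to the prefix dominance $\sum_{i\le t}\pibka_i(\x) \le \sum_{i\le t}\pir_i(\x)$, and both establish that dominance from the collapsed recursion $\pir_i(\x)=\min(\pib_i(\x),\,1-\sum_{i'<i}\pir_{i'}(\x))$. Your closed form $S_t=\min(P_t,1)$ is just a tidier packaging of the paper's two-case induction (its Case 1 is your $A_t\le 1$ bound from the budget hypothesis, its Case 2 is your $A_t\le P_t$ bound from $\alpha\le 1$), so the arguments are the same in substance.
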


\begin{proof}
For any $\ell \in [n]$, by definition, we have
\begin{gather*}
\pir_{\ell}(\x) = \sum_{j \in [m]} x_{\ell j} \cdot\min \bigg( p_{\ell j},  1 - \sum_{i < \ell} \pir_{i}(\x) \bigg) \\
= \min\bigg(\sum_{j \in [m]} x_{\ell j} \cdot p_{\ell j},  1 - \sum_{i < \ell} \pir_{i}(\x)\bigg) 
= \min\bigg(\pib_{\ell}(\x),  1 - \sum_{i < \ell} \pir_{i}(\x)\bigg)
\end{gather*}

Now we prove by induction that for any $\ell \in [n]$, $\sum_{i=1}^{\ell} \pir_i(\x) \geq \sum_{i=1}^{\ell} \pibka_i(\x)$. The base case with $\ell = 0$ is trivial. Suppose the claim is true for $\ell-1$, for the claim with $\ell$, there are two cases: 
\begin{itemize}
\item Case 1: $\sum_{i=1}^{\ell} \pir_i(\x) =1$. In this case, we simply have $$\sum_{i=1}^{\ell} \pir_i(\x) =1 \geq \sum_{i=1}^n \pibka_i(\x) \geq \sum_{i=1}^{\ell} \pibka_i(\x).$$
\item Case 2: $\sum_{i=1}^{\ell} \pir_i(\x) \neq 1$. In this case, we know $\pir_{\ell}(\x) \neq 1-\sum_{i < \ell} \pir_i(\x)$. Together with $\pir_{\ell}(\x) = \min\left(\pib_{\ell}(\x),  1 - \sum_{i < \ell} \pir_{i}(\x)\right) $, we know $\pir_{\ell}(\x) = \pib_{\ell}(\x)$. By induction hypothesis, we also have $\sum_{i=1}^{\ell-1} \pir_i(\x) \geq \sum_{i=1}^{\ell-1} \pibka_i(\x)$.  Put them together, we get $\sum_{i=1}^{\ell} \pir_i(\x) \geq \sum_{i=1}^{\ell} \pibka_i(\x)$.
\end{itemize}
Finally (here we set $v_{n+1} = 0$ for notation convenience),
\begin{align*}
&\welbka(\x) 
= \sum_{i=1}^n v_i \cdot \pibka(\x) 
= \sum_{\ell=1}^n (v_{\ell} - v_{\ell+1}) \cdot \sum_{i=1}^{\ell} \pibka(\x) \\
\leq &\sum_{\ell=1}^n (v_{\ell} - v_{\ell+1}) \cdot \sum_{i=1}^{\ell} \pir(\x)
= \sum_{i=1}^n v_i \cdot \pir(\x)
= \welr(\x).
\end{align*}

\end{proof}

Now we are ready to prove the main theorem for Algorithm \ref{alg:ptas_welr}.

\begin{theorem}
\label{thm:ptas_welr}
Algorithm \ref{alg:ptas_welr} is a PTAS for $\welr$, i.e., 
\begin{itemize}
    \item Algorithm \ref{alg:ptas_welr} runs in time polynomial in $n$ and $m$ for a fixed $\varepsilon$,
    \item and the output of Algorithm \ref{alg:ptas_welr} has $\welr$ at least $(1-\varepsilon)$ times the optimal $\welr$.
\end{itemize}
\end{theorem}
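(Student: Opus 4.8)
The plan is to dispatch the running-time claim by inspection and then concentrate on the $(1-\varepsilon)$ approximation guarantee. For the running time, the outer loop runs $n$ times, the inner loop over $\alpha$ runs $O(1/\varepsilon)$ times, and each iteration makes one call to the budgeted-matching PTAS of Theorem~\ref{thm:alg-bm} (polynomial in $n,m$ for fixed $\varepsilon$) plus a polynomial-time evaluation of $\welr(\xka)$; hence the total is polynomial in $n,m$ for fixed $\varepsilon$.

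For the approximation guarantee, let $\x^*$ be an optimal solution with $\OPT = \welr(\x^*)$, and set $\x^{**} = \zs(\x^*)$. By Lemma~\ref{lem:zeroes} this preserves every $\pir_i$ and the value $\welr$, and forces $\pir_i(\x^{**}) = \pib_i(\x^{**})$ for every non-discounted advertiser; by Lemma~\ref{lem:disct} at most one advertiser is discounted. The strategy is to exhibit, for the correct guess $(k,\alpha)$, a \emph{budget-feasible} solution of the discounted matching whose $\welbka$-value is within $(1-\varepsilon/2)$ of $\OPT$, so that the PTAS recovers it up to another $(1-\varepsilon/2)$ factor, and then to transfer this back to $\welr$ via Lemma~\ref{lem:bka}, which gives $\welr \ge \welbka$ on any budget-feasible point.

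First I would handle the case where $\x^{**}$ has no discounted advertiser: then $\sum_i\pib_i(\x^{**}) = \sum_i\pir_i(\x^{**})\le 1$ (Lemma~\ref{lem:atmostone}), so $\x^{**}$ is feasible for the $\alpha=1$ matching with $\welb(\x^{**})=\OPT$; the PTAS returns some $\x^{n,1}$ with $\welb(\x^{n,1})\ge(1-\varepsilon/2)\OPT$, and Lemma~\ref{lem:bka} yields $\welr(\x^{n,1})\ge\welb(\x^{n,1})\ge(1-\varepsilon)\OPT$. Otherwise let $k^*$ be the unique discounted advertiser and $\alpha^* = \pir_{k^*}(\x^{**})/\pib_{k^*}(\x^{**})\in(0,1)$. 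The pivotal structural fact I would establish here (from the proof of Lemma~\ref{lem:disct}) is that $\sum_{i\le k^*}\pir_i(\x^{**})=1$, hence $\sum_i\pir_i(\x^{**})=1$; since values are sorted, $\welr(\x^{**}) = \sum_i v_i\pir_i(\x^{**}) \ge v_{k^*}\sum_i\pir_i(\x^{**}) = v_{k^*}$. This single inequality $v_{k^*}\le\OPT$ is what tames every error term below.

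The hard part is the discretization of $\alpha$, which forces a two-way split. If $\alpha^*\ge\varepsilon/2$, I would take $\alpha$ to be the largest grid point at most $\alpha^*$, so $\alpha^*-\alpha<\varepsilon/2$; then $\x^{**}$ is feasible for the $(k^*,\alpha)$ matching since its budget is $1-\pir_{k^*}(\x^{**})+\alpha\pib_{k^*}(\x^{**})\le 1$, and $\welbka(\x^{**}) = \welr(\x^{**}) - v_{k^*}(\alpha^*-\alpha)\pib_{k^*}(\x^{**}) > \welr(\x^{**}) - (\varepsilon/2)v_{k^*} \ge (1-\varepsilon/2)\OPT$, using $\pib_{k^*}(\x^{**})\le 1$ and $v_{k^*}\le\OPT$. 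If instead $\alpha^*<\varepsilon/2$ there is no admissible grid point below $\alpha^*$, so I would drop $k^*$: zeroing its row gives an $\alpha=1$-feasible $\x'$ losing only $v_{k^*}\pir_{k^*}(\x^{**}) = v_{k^*}\alpha^*\pib_{k^*}(\x^{**}) < (\varepsilon/2)v_{k^*}\le(\varepsilon/2)\OPT$, so $\welb(\x')>(1-\varepsilon/2)\OPT$. In either subcase the PTAS returns a budget-feasible iterate, so Lemma~\ref{lem:bka} gives $\welr(\xka)\ge\welbka(\xka)\ge(1-\varepsilon/2)\welbka(\x^{**})>(1-\varepsilon/2)^2\OPT\ge(1-\varepsilon)\OPT$. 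Since Algorithm~\ref{alg:ptas_welr} retains the iterate of largest $\welr$, its output inherits this bound, completing the proof.
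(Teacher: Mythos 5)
Your proof is correct and follows the same overall strategy as the paper's: guess the (unique, by Lemma~\ref{lem:disct}) discounted advertiser $k$ and a discretized discount level $\alpha$, pass the optimum through $\zs$ so that every other allocated advertiser satisfies $\pir_i=\pib_i$, verify that the resulting point is budget-feasible for the $(k,\alpha)$ instance with $\welbka$-value at least $(1-\varepsilon/2)\cdot\welr(\x^*)$, and then chain the PTAS guarantee with Lemma~\ref{lem:bka}. Two details differ. First, to control the discretization error the paper bounds $v_k\,\pib_k(\x')\le\welr(\x^*)$ by observing that the single-edge allocation $\x'_k$ alone already achieves restricted welfare $v_k\,\pib_k(\x')$; you instead derive $v_{k^*}\le\welr(\x^*)$ from the structural fact that $\sum_{i\le k^*}\pir_i(\x^{**})=1$ whenever $k^*$ is discounted, and combine it with $\pib_{k^*}(\x^{**})\le 1$ --- equivalent in effect. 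Second, and more usefully, you split on whether $\alpha^*=\pir_{k^*}(\x^{**})/\pib_{k^*}(\x^{**})$ is at least $\varepsilon/2$, and when it is not you drop $k^*$ entirely and fall back to the $\alpha=1$ instance, charging the loss $v_{k^*}\alpha^*\pib_{k^*}(\x^{**})<(\varepsilon/2)\cdot\welr(\x^*)$. The paper instead prescribes $\alpha=\lfloor 2\pir_k(\x^*)/(\pib_k(\x^*)\varepsilon)\rfloor\cdot\varepsilon/2$, which equals $0$ exactly in that corner case, and $0$ is not among the values of $\alpha$ the algorithm iterates over; your case split closes this small gap (the alternative fix is to add $\alpha=0$ to the grid, under which the paper's analysis goes through verbatim).
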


\begin{proof}
The running time guarantee is straightforward to check. The algorithm has $O(n / \varepsilon)$ iterations, and inside each iteration, both the budget matching algorithm and $\welr$ computation can be done in time polynomial in $n$ and $m$.

For the approximation guarantee, we denote the optimal allocation for $\welr$ as $\x^*$. By Lemma \ref{lem:disct}, $\x^*$ has at most one discounted advertiser. If $\x^*$ does have one discounted advertiser, let $k$ to be that one, and set $\alpha = \left\lfloor \frac{2\pir_k(\x^*)}{ \pib_k(\x^*) \varepsilon} \right\rfloor \cdot \varepsilon / 2$. If $\x^*$ does not have a discounted advertiser, let $k$ be any arbitrary advertiser and set $\alpha = 1$.

Set $\x' = \zs(\x^*)$ using $\zs$ defined in Definition \ref{def:zeroes}. By Lemma \ref{lem:zeroes}, we know $\pir_i(\x') = \pir_i(\x^*)$ for $i\in [n]$, $\welr(\x') = \welr(\x^*)$, and $\pir_i(\x') = \pib_i(\x')$ for $i \in [n] \backslash \{k\}$. For $\pib_k(\x')$, there are two cases depending on whether $k$ is a discounted advertiser in $\x^*$:
\begin{itemize}
\item If $k$ is not a discounted advertiser in $\x^*$, we know we set $\alpha=1$. By Lemma \ref{lem:zeroes}, we have $\pir_k(\x')=\pib_k(\x') = \alpha \cdot \pib_k(\x')$.
\item If $k$ is a discounted advertiser in $\x^*$, we know $\pir_k(\x^*) > 0$ and by the definition of $\zs$, $\x^*_{kj} = \x'_{kj}$ for $j \in [m]$. Therefore, $\pib_k(\x^*) = \pib_k(\x')$. We get
\[
 \frac{2\pir_k(\x^*)}{ \pib_k(\x^*) \varepsilon}  \cdot \frac{\varepsilon}{2} \cdot \pib_k(\x') = \frac{\pir_k(\x')}{\pib_k(\x')} \cdot \pib_k(\x') = \pir_k(\x').
\]
Notice that
\[
\alpha  \leq  \frac{2\pir_k(\x^*)}{ \pib_k(\x^*) \varepsilon}  \cdot \frac{\varepsilon}{2}  \leq \alpha+\varepsilon/2
\]
We get $\alpha \cdot \pib_k(\x') \leq \pir_k(\x') \leq (\alpha+\varepsilon/2) \cdot \pib_k(\x')$
\end{itemize}
And we can conclude in both cases we have $\alpha \cdot \pib_k(\x') \leq \pir_k(\x') \leq (\alpha+\varepsilon/2) \cdot \pib_k(\x')$.

We then compare $\welr(\x')$ with $\welbka(\x')$:
\begin{align*}
&\welbka(\x')\\ &= \sum_{i\in[n] \backslash \{k\}} v_i \cdot \pibka_i(\x') + v_k \cdot \pibka_k(\x') \\&= \sum_{i\in[n] \backslash \{k\}} v_i \cdot \pib_i(\x') + v_k \cdot \pibka_k(\x') \\&= \sum_{i\in[n] \backslash \{k\}} v_i \cdot \pir_i(\x') + v_k \cdot \pibka_k(\x') \\
&= \sum_{i\in[n] \backslash \{k\}} v_i \cdot \pir_i(\x') + v_k \cdot \alpha \cdot \pib_k(\x') \\
&\geq  \sum_{i\in[n] \backslash \{k\}} v_i \cdot \pir_i(\x') + v_k \cdot \pir_k(\x') - (\varepsilon /2) \cdot v_k \cdot \pib_k(\x') \\
&\geq  \sum_{i\in[n] \backslash \{k\}} v_i \cdot \pir_i(\x') + v_k \cdot \pir_k(\x') - (\varepsilon /2) \cdot \welr(\x') \\
&= (1-\varepsilon/2) \cdot \welr(\x').
\end{align*}
The inequality in the second last step comes from the fact that when we only allocate to advertiser $k$ according to $\x'_k$, the clickthrough rates don't need to be capped by 1 and the restricted welfare should be $v_k \sum_{j\in[m]}\x'_{kj}p_{kj} = \v_k \cdot \pib_k(\x')$ which is at most the optimal restricted welfare $\welr(\x^*) = \welr(\x')$.

Now we show $\x'$ is a valid solution for $\welbka$, i.e. $\x'$ satisfies constraint \\$\sum_{i=1}^n \pibka_i(\x') \leq 1$. We have
\begin{align*}
&\sum_{i=1}^n \pibka_i(\x') = \sum_{i\in[n] \backslash \{k\}} \pibka_i(\x') + \pibka_k(\x') \\
= &\sum_{i\in[n] \backslash \{k\}} \pib_i(\x') + \pibka_k(\x')
= \sum_{i\in[n] \backslash \{k\}} \pir_i(\x') + \pibka_k(\x') \\
= &\sum_{i\in[n] \backslash \{k\}} \pir_i(\x') + \alpha \cdot \pib_k(\x')
\leq \sum_{i\in[n] \backslash \{k\}} \pir_i(\x') + \pir_k(\x') \\
\leq& 1 \quad\text{(by applying Lemma \ref{lem:atmostone})}
\end{align*}

Since $\x'$ is a valid solution for $\welbka$, by the approximation guarantee of the PTAS for the budgeted matching problem, we know $\welbka(\x') \leq \welbka(\xka) / (1-\varepsilon / 2)$.

Finally, by applying Lemma \ref{lem:bka}, we have $\welbka(\xka) \leq \welr(\xka)$.

To sum up, we have
\begin{align*}
& \welr(\x^*) = \welr(\x')
\leq \welbka(\x') / (1-\varepsilon/2)\\
\leq & \welbka(\xka) / (1-\varepsilon/2)^2
\leq \welr(\xka) / (1-\varepsilon/2)^2
\leq \welr(\xka) / (1-\varepsilon).
\end{align*}

Note that the output of Algorithm \ref{alg:ptas_welr} has $\welr$ at least $\welr(\xka)$. This completes the proof of the approximation guarantee of Algorithm \ref{alg:ptas_welr}.
\end{proof}

Combine Theorem \ref{thm:ptas_welr} and what we have in Section \ref{sec:4apx} (Lemma \ref{lem:4ap_ub} and \ref{lem:4ap_lb}), we get the following corollary.

\begin{corollary}
\label{cor:4welr}
Algorithm \ref{alg:ptas_welr} gives a $\frac{4}{1-\varepsilon}$-approximation to the optimal $\wel$ in the Cascade model.
\end{corollary}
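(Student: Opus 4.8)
The plan is to treat this purely as a chaining of the three results already established: the PTAS guarantee for the restricted welfare (Theorem \ref{thm:ptas_welr}) and the two-sided comparison between $\welr$ and $\wel$ (Lemmas \ref{lem:4ap_ub} and \ref{lem:4ap_lb}). The restricted welfare $\welr$ serves as a bridge that sandwiches the true welfare $\wel$ both at the algorithm's output and at the true optimum, and the whole argument is a short sequence of inequalities with no genuine computation.

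Concretely, I would let $\x$ denote the output of Algorithm \ref{alg:ptas_welr}, let $\x^\star$ denote the optimal allocation for the WDP objective $\wel$, and let $\x^\star_r \in \argmax_{\x \in \X} \welr(\x)$ denote the $\welr$-optimizer. First I would apply the lower-direction comparison (Lemma \ref{lem:4ap_lb}) to the output allocation to get $\wel(\x) \ge \welr(\x)/4$. Next, the PTAS guarantee of Theorem \ref{thm:ptas_welr} gives $\welr(\x) \ge (1-\varepsilon)\,\welr(\x^\star_r)$. Since $\x^\star_r$ maximizes $\welr$ over $\X$ and $\x^\star \in \X$, optimality yields $\welr(\x^\star_r) \ge \welr(\x^\star)$. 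Finally, the upper-direction comparison (Lemma \ref{lem:4ap_ub}) applied at $\x^\star$ gives $\welr(\x^\star) \ge \wel(\x^\star)$. Stringing these together produces
\[
\wel(\x) \;\ge\; \frac{\welr(\x)}{4} \;\ge\; \frac{(1-\varepsilon)\,\welr(\x^\star_r)}{4} \;\ge\; \frac{(1-\varepsilon)\,\welr(\x^\star)}{4} \;\ge\; \frac{1-\varepsilon}{4}\,\wel(\x^\star),
\]
so that $\wel(\x^\star) \le \tfrac{4}{1-\varepsilon}\,\wel(\x)$, which is exactly the claimed $\tfrac{4}{1-\varepsilon}$-approximation.

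The only point requiring care—and the closest thing to an obstacle—is orienting the two approximation lemmas correctly: Lemma \ref{lem:4ap_lb} (which loses the factor of $4$) must be applied to the algorithm's output $\x$, while Lemma \ref{lem:4ap_ub} (which loses nothing) must be applied to the WDP optimum $\x^\star$. This asymmetry is what lets $\welr$ act as the intermediate quantity; reversing the roles would not close the chain. Beyond this bookkeeping, and the observation that the $\welr$-optimizer dominates $\welr(\x^\star)$ by definition of optimality, there is nothing further to prove.
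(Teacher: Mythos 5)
Your proof is correct and is exactly the chaining the paper intends: the paper simply states that the corollary follows by combining Theorem \ref{thm:ptas_welr} with Lemmas \ref{lem:4ap_ub} and \ref{lem:4ap_lb}, and your four-inequality chain (with Lemma \ref{lem:4ap_lb} applied to the output and Lemma \ref{lem:4ap_ub} applied to the $\wel$-optimum) is the right way to make that explicit.
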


\subsection{Monotonic $O(\ln m)$-approximation of the WDP}
\label{sec:polylog-approx}
In this section, we give a randomized $O(\ln m)$-approximate algorithm for the WDP, which induces a monotonic allocation function. According to Section \ref{subsec:approx-mech}, the corresponding welfare-maximization auction and revenue-maximization auction are implied with the same approximation guarantee.
\begin{restatable}{theorem}{logApprox}
\label{thm:log-approx}
    There is a polynomial-time randomized algorithm that finds an allocation with expected approximation ratio of $O(\ln m)$ to the optimal welfare in the Cascade model. Moreover, the allocation algorithm is monotonic.
\end{restatable}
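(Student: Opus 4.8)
The plan is to produce the allocation as a random mixture of $O(\ln m)$ simple candidate matchings, each of which is individually monotone, so that the mixture inherits monotonicity while a uniform choice among them costs only a logarithmic factor. As preprocessing I would pass to the restricted welfare $\welr$, which by Lemmas \ref{lem:4ap_ub} and \ref{lem:4ap_lb} is within a factor $4$ of $\wel$, and to the budgeted surrogate $\welb$ with $\pib_i(\x)=\sum_j x_{ij}p_{ij}$; since $\welb(\x)\ge\welr(\x)\ge\wel(\x)$ for every feasible $\x$, an optimal cascade allocation $\x^*$ satisfies $\welb(\x^*)\ge\OPT$. The key structural device is to bucket the edges by the scale of their standalone rate: let bucket $\ell$ contain the pairs $(i,j)$ with $p_{ij}\in(2^{-\ell-1},2^{-\ell}]$ for $\ell=0,1,\dots,L$ with $L=O(\ln m)$, discarding pairs whose $p_{ij}$ is a $1/\mathrm{poly}(m)$ fraction of the largest standalone rate (which perturbs the optimum only by a lower-order additive term). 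Crucially, bucket membership depends only on $\p$, never on the reported values, so changing $v_k$ never moves an edge between buckets.

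Within a single bucket $\ell$ I would round every rate down to the common value $2^{-\ell}$, losing at most a factor $2$, so that every displayed advertiser contributes the identical effective rate. Under equal rates the cascade rate of advertiser $k$ collapses to $2^{-\ell}(1-2^{-\ell})^{r_k-1}$, where $r_k$ is the rank of $k$ in decreasing value order among displayed advertisers; hence maximizing $\wel$ within the bucket amounts to displaying the highest-value advertisers that can be simultaneously matched to distinct positions, subject to the cap $K$ and a cardinality cap $c_\ell\approx 2^{\ell+1}$ that keeps the total click mass $O(1)$. This candidate $M_\ell$ is computed by a transversal/matching routine and is monotone: raising $v_k$ can only keep $k$ displayed, move it earlier in value order, and weakly worsen the positions of its predecessors, all of which weakly increase $k$'s cascade rate. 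The final algorithm outputs $M_\ell$ for $\ell$ drawn uniformly from $\{0,\dots,L\}$; since each $M_\ell$ is monotone and $L$ is value-independent, the induced total click probability of each advertiser is a convex combination of monotone functions of its own value and is therefore monotone, which is exactly the condition required in Section \ref{subsec:approx-mech}.

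For the approximation guarantee I would prove $\sum_{\ell=0}^{L}\wel(M_\ell)\ge c\cdot\OPT$ for a constant $c>0$, so that a uniform choice of $\ell$ gives expected welfare at least $c\,\OPT/(L+1)=\Omega(\OPT/\ln m)$. The budgeted objective is additive across buckets, so writing $\x^*_\ell$ for the restriction of $\x^*$ to bucket $\ell$ we have $\sum_\ell\welb(\x^*_\ell)=\welb(\x^*)\ge\OPT$. Each $\x^*_\ell$ uses at most $c_\ell$ edges and is thus feasible for the within-bucket problem, and since $M_\ell$ maximizes the rounded matching weight there, $\welb(M_\ell)$ is at least a constant fraction of $\welb(\x^*_\ell)$. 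Finally, because the total click mass placed by $M_\ell$ is $O(1)$, the cascade discounts $\prod_{i'<i}(1-q_{i'})$ are bounded below by a constant, giving $\wel(M_\ell)=\Omega(\welb(M_\ell))$. Chaining these three constant-factor comparisons and summing over $\ell$ yields the bound.

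\textbf{The main obstacle is the monotonicity proof for the realized cascade rate}, not the approximation estimate. The difficulty is that $\pi_k$ depends not only on $k$'s own matched rate $q_k$—whose monotonicity in $v_k$ follows from the exact-maximization argument of Lemma \ref{lem:wdp-monotone}—but also, through the factor $\prod_{i<k}(1-q_i)$, on the rates of all higher-value advertisers, and raising $v_k$ simultaneously reorders the value-sorted cascade and rearranges the matching. The uniform-within-bucket rounding is precisely the device that neutralizes this coupling, turning the cascade contribution into a clean geometric function of rank whose monotonicity is checked by an exchange argument; reconciling this rounding with matching feasibility across advertisers and with the cap $K$ is where the care lies. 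I expect this tension—that enforcing monotonicity forces the coarse, scale-separated construction rather than the budget-respecting PTAS of Theorem \ref{thm:ptas_welr}—to be exactly why the ratio degrades from the constant of Corollary \ref{cor:4welr} to $O(\ln m)$, matching the open problem flagged after Corollary \ref{cor:cascade-mech}.
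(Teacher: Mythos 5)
Your overall architecture --- bucketing edges by the scale of $p_{ij}$ into $O(\log m)$ groups, solving each group with a monotone constant-factor routine, and outputting a uniformly random group --- is exactly the paper's (Lemma~\ref{lem:random-pick} plus Algorithm~\ref{alg:greedy}). The difference is the within-bucket step: the paper runs a greedy maximal matching on the true weights $v_i\cdot p_{ij}$ and uses the insertion order as the permutation, whereas you round all rates in a bucket to a common value and select the highest-value matchable advertisers. Your version has a genuine gap in the monotonicity claim, which you yourself flag as the crux. The reduction of Section~\ref{subsec:approx-mech} requires the \emph{true} click probability $\pi_k(\x,\sigma)$ to be non-decreasing in $v_k$, but your exchange argument only controls the rounded rate $2^{-\ell}(1-2^{-\ell})^{r_k-1}$. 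Because your selection rule is blind to the actual $p_{kj}$'s (they are all identified within a bucket), the matching routine realizing the chosen advertiser set may, after $v_k$ increases and the processing order changes, reassign $k$ to a position whose true rate is smaller by up to a factor of $2$; and Lemma~\ref{lem:wdp-monotone} does not rescue this, since your within-bucket algorithm exactly maximizes a rounded objective, not the true one. The paper avoids this precisely by ordering edges by $v_i\cdot p_{ij}$: raising $v_k$ preserves the relative order of $k$'s own edges, so $k$ is matched at least as early and to a weakly larger $p_{kj}$.

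A second, fixable problem is the approximation chain. You write $\sum_\ell\welb(\x^*_\ell)=\welb(\x^*)\ge\OPT$ and assert that each $\x^*_\ell$ ``uses at most $c_\ell$ edges,'' but the restriction of an optimal cascade allocation to bucket $\ell$ can contain up to $\min(n,m,K)$ edges, and then $\welb(\x^*_\ell)$ --- which carries no budget or cardinality cap --- can exceed the capped within-bucket optimum by an unbounded factor (e.g.\ $m$ equal-value edges of rate $2^{-\ell}$). The comparison must pass through a truncated quantity such as $\welr$, where edges beyond cumulative click mass $1$ contribute nothing and hence at most $2^{\ell+1}$ edges per bucket matter; this is exactly the role of Lemma~\ref{lem:baseToWel}. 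Relatedly, discarding edges whose $p_{ij}$ is a $1/\mathrm{poly}(m)$ fraction of the largest \emph{rate} is unsafe, since an edge with tiny $p_{ij}$ but enormous $v_i$ can carry essentially all the welfare; the paper instead keeps a catch-all bucket for $p_{ij}\le 1/(2m)$, in which the total click mass is at most $1/2$ and the cascade discount costs only a constant.
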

Similar to our approach in Section~\ref{sec:ptas}, we work with restricted welfare and the budgeted matching problem using the base click-through rate $p_{ij}$'s instead of the actual (cascade) click-through rate. A naive way to address the budget constraint is to only pick a single edge which must be feasible since $p_{ij}\leq 1$. The edge with maximum $v_i\cdot p_{ij}$ is a trivial $m$-approximation. Note that in the previous argument, we essentially turn the budget constraint into a cardinality constraint (of $1$ edge), and we can build upon this idea to get the stronger $O(\ln m)$-approximation by bucketizing the edges by their $p_{ij}$'s. 

In particular, we partition the advertiser-position pairs (i.e. edges) into $G_{1},\ldots, G_{\log_2 (4m)}$ (where for notation simplicity we assume $m$ is a power of $2$). The subgraph $G_{\ell}$ contains all edges with $\frac{1}{2^{\ell}}< p_{ij}\leq \frac{1}{2^{\ell-1}}$ for $\ell\in [1,\log_2 (2m)]$, and the last subgraph $G_{\log_2 (4m)}$ contains all remaining edges (i.e. those with $p_{ij}\leq 1/(2m)$). 

If we can find the allocation with the optimal welfare in each subgraph respectively, the sum of these allocations' welfare is at least the welfare of the optimal allocation in the original graph (i.e. with all edges), and this gives us the following.
\begin{restatable}{lemma}{randomPick}
\label{lem:random-pick}
If we can find a $\beta$-approximately optimal welfare allocation in each subgraph respectively, then the algorithm of picking a bucket $\ell\in [1,\log_2 (4m)]$ uniformly at random and using the $\beta$-approximately optimal allocation of $G_{\ell}$ would have an expected $\beta\cdot\log_2 (4m) $ approximation ratio to the optimal welfare in the full graph. Furthermore, if the algorithm to find the allocation in each subgraph is monotonic, the overall algorithm (i.e. with the random picking) is also monotonic.
\end{restatable}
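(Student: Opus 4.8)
The plan is to separate the statement into three pieces: a superadditive decomposition of the global optimum across buckets, a uniform-averaging argument that trades the per-bucket guarantee for a $\log_2(4m)$ factor, and a closure-under-mixtures argument for monotonicity. I expect the decomposition to be the only real obstacle; the other two steps follow essentially mechanically.

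For the decomposition, write $L := \log_2(4m)$, let $\OPT$ be the optimal welfare in the full graph attained by an allocation $\x^*$, and let $\OPT_\ell$ be the optimum attainable using only the edges of $G_\ell$. The crux is $\sum_{\ell=1}^{L}\OPT_\ell \ge \OPT$. To establish it, I would restrict the global optimum to each bucket: let $\x^{(\ell)}$ coincide with $\x^*$ on the edges of $G_\ell$ and vanish elsewhere. As a sub-allocation of $\x^*$, it inherits every feasibility constraint --- the matching constraints, the cardinality bound $\le K$, and the budget $\sum_i \pib_i(\x^{(\ell)}) \le \sum_i \pib_i(\x^*) \le 1$ --- so $\x^{(\ell)}$ is feasible in $G_\ell$ and hence $\welb(\x^{(\ell)}) \le \OPT_\ell$. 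Because the budgeted-matching objective $\welb$ is additive over edges and $G_1,\dots,G_L$ partition the edge set, $\sum_\ell \welb(\x^{(\ell)}) = \welb(\x^*) = \OPT$; chaining the two inequalities gives $\sum_\ell \OPT_\ell \ge \OPT$. The subtlety to watch is exactly this choice of an edge-additive objective, which makes the decomposition an equality and keeps feasibility under edge deletion.

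For the approximation ratio, let $W_\ell \ge \OPT_\ell/\beta$ be the welfare of the $\beta$-approximate allocation returned in $G_\ell$. Since the random-bucket algorithm outputs the allocation of a uniformly chosen bucket, its expected welfare is $\tfrac1L\sum_\ell W_\ell \ge \tfrac{1}{\beta L}\sum_\ell \OPT_\ell \ge \tfrac{1}{\beta L}\OPT$, i.e.\ an expected approximation ratio of $\beta L = \beta\log_2(4m)$, as claimed.

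For monotonicity, I would write the overall expected click-through rate of advertiser $i$ as $\tfrac1L\sum_\ell \pi_i^{(\ell)}(\v)$, where $\pi_i^{(\ell)}(\v)$ is the (expected) click-through rate produced by the bucket-$\ell$ algorithm. The mixing weights $1/L$ do not depend on $\v$, and by hypothesis each $\pi_i^{(\ell)}(\v)$ is non-decreasing in $v_i$ with $\v_{-i}$ fixed; a fixed non-negative combination of non-decreasing functions is non-decreasing, so the overall total click probability is monotone in $v_i$ --- precisely the property Section~\ref{subsec:approx-mech} needs to invoke the envelope theorem and obtain an $\epsilon$-IC, IR mechanism.
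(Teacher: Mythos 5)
Your averaging step and your monotonicity step are both correct and match the paper's. The gap is in the decomposition, and it comes from switching objectives mid-argument. The lemma's hypothesis and conclusion are both about the \emph{cascade} welfare $\wel$: the per-bucket algorithm (Lemma~\ref{lem:greedy-alg}) guarantees a $\beta$ fraction of the optimal cascade welfare in $G_\ell$, so the quantity $\OPT_\ell$ you need in the chain $W_\ell \ge \OPT_\ell/\beta$ must be the cascade optimum over $G_\ell$. But your decomposition inequality $\welb(\x^{(\ell)}) \le \OPT_\ell$ compares the edge-additive budgeted objective of the restriction against that cascade optimum, and this can fail by an unbounded factor, because $\welb$ ignores the cascading discount. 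Concretely, take ten advertisers of value $1$, each with a single edge of $p_{ij}=0.6$ into a distinct position; all edges land in $G_1$, the global cascade optimum $\x^*$ shows all ten and has $\wel(\x^*)\approx 1$, yet $\welb(\x^{(1)})=6$ while the cascade optimum inside $G_1$ is also $\approx 1$. The same example defeats your claim that $\sum_i \pib_i(\x^*)\le 1$: that budget constraint belongs to the auxiliary budgeted-matching problem of Section~\ref{sec:ptas}, not to the original WDP, so the cascade-optimal $\x^*$ need not satisfy it (here the sum is $6$).

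The fix is to decompose the cascade welfare directly, which is what the paper does. Restrict $\x^*$ to $G_\ell$ to get $\x^*_\ell$, keeping the induced order of the surviving pairs; in the cascade model each ad shown in $\x^*_\ell$ has a click-through rate at least as large as it had in $\x^*$, because deleting the ads that precede it only removes discount factors $\bigl(1-\sum_{i'}x_{i'j'}p_{i'j'}\bigr)\le 1$ from the product. Since the buckets partition the shown ads of $\x^*$, summing $v_i\pi_i(\x^*)\le v_i\pi_i(\x^*_\ell)$ over all shown ads gives $\wel(\x^*)\le\sum_\ell \wel_\ell(\x^*_\ell)\le\sum_\ell\OPT_\ell$ with every term measured in cascade welfare, and your averaging and monotonicity arguments then go through unchanged.
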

This reduces our task to finding a monotonic allocation algorithm that approximately optimize the welfare in a single subgraph. To make our allocation algorithm monotonic, we need to explicitly specify the permutation $\sigma$ of the matched positions, which may not be the same as the optimal permutation that ranks the positions in decreasing order of their matched advertisers' value. 

We use the (canonical) greedy algorithm to find a {\em maximal} matching subject to cardinality constraint in the subgraph $G_{\ell}$, and use the same order of edges being added to the matching as the permutation of the matched positions.

\begin{algorithm}[H]
\caption{Greedy maximal matching in $G_{\ell}$}
\label{alg:greedy}
\DontPrintSemicolon
\KwIn{Index $\ell\in[1,\log_2 (4m)]$, advertiser values $v_i$ for $i \in [n]$, and $p_{ij}$'s \textbf{for edges $(i,j)$  in $G_{\ell}$}.}
\KwOut{Matching $\x_{\ell}$ and permutation $\sigma_{\ell}$ of matched positions.}
Start with empty matching $\x_{\ell}$ and number of matched edges $k=0$\;
Go through the edges in decreasing order of weights $w_{ij}=v_i\cdot p_{ij}$ (with any deterministic tie-breaking)\;
\While{There are still edges to consider, denote $(i,j)$ as the next edge}{
    \If{$(i,j)$ can be added to $\x_{\ell}$ (i.e. both advertiser $i$ and position $j$ are unmatched)}{
    $k\leftarrow k+1$, $\sigma_{\ell}(j)\leftarrow k$. Add the edge $(i,j)$ to $\x_{\ell}$\;
    \If{$k=\min(2^{\ell},m)$}{Return $\x_{\ell}$ and $\sigma_{\ell}$\;}
    }
    }
Return $\x_{\ell}$ and $\sigma_{\ell}$\;
\end{algorithm}
\begin{restatable}{lemma}{greedyAlg}
\label{lem:greedy-alg}
    For any subgraph $G_{\ell}$, Algorithm~\ref{alg:greedy} (in polynomial time) finds an allocation of welfare at least $\frac{1}{28}$ of the optimal allocation in $G_{\ell}$, and the allocation algorithm is monotonic.
\end{restatable}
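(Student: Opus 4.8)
The plan is to establish the two halves of the lemma separately: the $\tfrac{1}{28}$ approximation guarantee and the monotonicity of the allocation rule. Throughout, fix a bucket $G_\ell$ and let $c=\min(2^\ell,m)$ be the cardinality cap used by Algorithm~\ref{alg:greedy}. The central auxiliary quantity is $W_c$, the maximum base welfare $\welb(\x)=\sum_{i,j}v_i x_{ij}p_{ij}$ taken over matchings $\x$ that use at most $c$ edges of $G_\ell$. I would reduce the approximation claim to two inequalities: (i) the optimal cascade welfare in $G_\ell$ satisfies $\OPT_\wel\le W_c$, and (ii) the cascade welfare of Algorithm~\ref{alg:greedy}'s output, evaluated under the insertion-order permutation $\sigma_\ell$ that the algorithm fixes, is at least a constant fraction of $W_c$.

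For (i) I would use the restricted welfare as a bridge. Since $\wel(\x)\le\welr(\x)$ pointwise (Lemma~\ref{lem:4ap_ub}), it suffices to bound the optimal $\welr$ in $G_\ell$. The key is the bucket property: for a non-final bucket every edge has $p_{ij}>2^{-\ell}$, so in any allocation each advertiser whose $\pir_i(\x)$ equals its uncapped value $\pib_i(\x)$ contributes strictly more than $2^{-\ell}$; because the cumulative restricted click-through never exceeds $1$ (Lemma~\ref{lem:atmostone}), at most $2^\ell-1$ such advertisers appear, Lemma~\ref{lem:disct} adds at most one further (discounted) advertiser, and there are only $m$ positions, so the advertisers with $\pir_i>0$ number at most $c$. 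These form a matching of at most $c$ edges, whence $\welr(\x)\le\welb(\text{those edges})\le W_c$. The final bucket is trivial since $c=m$ already bounds the number of matched edges. Thus $\OPT_\wel\le W_c$.

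For (ii) I would argue in two steps. First, a charging argument bounds the cardinality-capped greedy matching against $W_c$: if greedy terminates maximally it is a standard $\tfrac12$-approximate weighted matching, and if it stops at the cap $c$ one charges each optimal edge of weight above the termination threshold to a conflicting greedy edge of no smaller weight, and bounds the total weight of the remaining optimal edges (all below the threshold, at most $c$ of them) by $c$ times the threshold, which is at most the greedy weight; this yields greedy base welfare $\ge\tfrac13 W_c$. Second, I would convert base welfare to cascade welfare using the bucket property again: the prefix of greedy edges whose cumulative base mass is at most $\tfrac12$ has length at least $c/4$ (each $p_{ij}\le 2^{-\ell+1}$), every edge in this prefix is cascade-discounted by at least $\prod(1-p)\ge\tfrac12$, and since greedy processes edges in decreasing weight order this prefix already carries at least a $\tfrac14$ fraction of greedy's total base welfare. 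Combining the constants, and handling the final bucket separately (where the total base mass is at most $\tfrac12$, giving $\tfrac14$ directly), yields cascade welfare at least $\tfrac{1}{28}\,W_c\ge\tfrac{1}{28}\,\OPT_\wel$; polynomial running time is immediate.

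The monotonicity claim is where I expect the main difficulty. I would fix the competitors' values and track the greedy execution as $v_i$ increases continuously: the weights $w_{ij}=v_ip_{ij}$ of advertiser $i$'s edges scale up together, preserving their internal order, so the processing order changes only through a sequence of elementary swaps in which one of $i$'s edges overtakes an adjacent edge of some other advertiser. Coupling the two greedy runs across a single such swap, I would argue that moving one of $i$'s edges earlier causes $i$ to be matched no later in $\sigma_\ell$ and to a position of base click-through at least as high, so that $\pi_i$—equal to $p_{ij^*}$ times the product of $(1-p)$ over edges preceding $i$'s matched edge—weakly increases, while no swap decreases it. Composing the swaps gives monotonicity within a single bucket, and Lemma~\ref{lem:random-pick} then lifts it to the randomized bucket-selection algorithm. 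The delicate point is controlling the downstream effect of a swap on the rest of the matching; I would handle it with an exchange argument showing that the set of positions still available when $i$'s edge is considered only grows as that edge moves earlier.
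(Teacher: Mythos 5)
Your proposal follows essentially the same route as the paper: bound the optimal cascade welfare by the cardinality-capped base welfare via the restricted welfare (the paper's Lemma~\ref{lem:baseToWel}), show that greedy constant-approximates the capped base welfare by a charging argument (Lemma~\ref{lem:baseApprox}), convert greedy's base welfare back to cascade welfare using the bucket structure (Lemma~\ref{lem:welToBase}), and prove monotonicity by observing that raising $v_i$ only moves $i$'s edges earlier in the greedy order while preserving their internal order. Your intermediate constants differ slightly (a $1/3$ charging bound instead of the paper's $1/2$, and a top-prefix weight argument instead of the paper's values-within-a-factor-of-two argument), but they combine to $1/24 \ge 1/28$, so the stated bound still holds.
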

It is straightforward to see that Theorem~\ref{thm:log-approx} follows from Lemma~\ref{lem:random-pick} and Lemma~\ref{lem:greedy-alg}.

\bibliographystyle{ACM-Reference-Format}
\bibliography{references}
\newpage
\appendix
\section{Missing proofs of Section~\ref{sec:polylog-approx}}
In this section we give the missing proofs in the analysis of the monotonic $O(\ln m)$-approximation algorithm for welfare maximization. 
\randomPick*
\begin{proof}
We consider the allocation $\x^*$ with the optimal welfare in the full graph, and $\x^*_\ell$ as the intersection of $\x^*$ and the subgraph $G_{\ell}$, i.e. $\x^*_\ell$ contains the subset of the advertiser-position pairs of $\x^*$ that exist in $G_{\ell}$ and the ranking of these pairs in $\x^*_\ell$ remains the same as their (relative) ranking in the full allocation $\x^*$. Denote $\wel_{\ell}(\x^*_\ell)$ as the welfare of $\x^*_\ell$ in $G_{\ell}$, since each $G_{\ell}$ is a subgraph of the full graph, any shown ad in $\x^*_\ell$ must have a higher click-through rate in $\x^*_\ell$ compared to its click-through rate in $\x^*$ in the Cascade model (since earlier ads may not exist in subgraph). Moreover, as all the $\x^*_\ell$'s together cover all shown ads allocated in $\x^*$, we must have
\[
\wel(\x^*)\leq \sum_{\ell} \wel_{\ell}(\x^*_{\ell}).
\]
If we have a $\beta$-approximately optimal allocation $\x_{\ell}$ in each subgraph $G_{\ell}$, we know $\wel_{\ell}(\x_{\ell})\geq \ \wel_{\ell}(\x^*_{\ell}/\beta)$, and it's also clear that $\wel(\x_{\ell})= \wel_{\ell}(\x_{\ell})$. Thus,
\[
\sum_{\ell\in[1,\log_2(4m)]} \wel(\x_{\ell}) \geq \wel(\x^*)/\beta .
\]
Pick a $\x_{\ell}$ uniformly at random would then have welfare on expectation at least $\frac{1}{\beta\cdot\log_2 (4m)}\cdot \wel(\x^*)$.

Since all $\x_{\ell}$'s are results of a monotonic algorithm, randomly sampling from them is also monotonic.
\end{proof}
As noted, we work with the base click-through rate when we find a matching in our allocation, and recall the corresponding definitions:
\[
\welbase(\x) = \sum_{i\in [n]} v_i\pibase_i(\x), \text{ 
 where }
\pibase_i(\x) = \sum_{j\in[m]} x_{ij}\cdot p_{ij}.
\]
As noted, (for the monotonicity of the allocation algorithm) we may use a permutation $\sigma$ of the matched positions that can be different from the optimal permutation which ranks the positions in decreasing order of their matched advertisers' value. To keep things clear, in our analysis we explicitly write any allocation as a pair $(\x,\sigma)$ where $\x$ specifies the matching between advertisers and positions, and $\sigma$ gives the ranking of the matched positions. We also use $\sigma_0$ to denote the (generic) optimal permutation which (when paired with any matching $\x$,) ranks the matched positions by their matched advertiser's value. Using these notations, we can restate Lemma~\ref{lem:4ap_ub} as
\[
\welr(\x,\sigma_0) \geq \wel(\x,\sigma_0) \quad \text{for any matching }\x
\]
Furthermore, since $\pibase_i(\x)$ is without any truncation or cascading effect, it's straightforward to see that for any matching and permutation $(\x,\sigma)$
\begin{equation}
\label{eq:welfares}
\welbase(\x,\sigma) = \welbase(\x,\sigma_0)\geq \welr(\x,\sigma_0) \geq \wel(\x,\sigma_0). 
\end{equation}

Similar to the budgeted matching problem (as a proxy to optimize the restricted welfare), we consider a cardinality constraint where in subgraph $G_\ell$ we optimize over the space of matchings in $G_\ell$ with at most $2^{\ell}$ edges (i.e. $\sum_{i,j}x_{ij}\leq 2^{\ell}$). Maximizing restricted welfare over matchings in $G_\ell$ remain (effectively) the same with or without the cardinality constraint, and together with Lemma~\ref{lem:4ap_ub} we have
\begin{lemma}
\label{lem:baseToWel}
    Let $\X_\ell$ be the space of feasible matchings in $G_{\ell}$, and denote $|\x|$ of a matching $\x$ as the number of matched edges. For any matching $\x$ and permutation $\sigma$ we have
    \[
    \max_{\x\in \X_\ell,|\x|\leq 2^{\ell}}\welbase(\x,\sigma) \geq \max_{\x\in \X_\ell}\wel(\x,\sigma_0).
    \]
Note $\sigma_0$ is the (generic) permutation that orders the positions by their matched advertisers' values, and is the optimal permutation (w.r.t any matching) when maximizing $\wel$.
\end{lemma}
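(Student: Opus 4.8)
The plan is to reduce the whole statement to a cardinality bound, since the welfare comparison itself is already supplied by \eqref{eq:welfares}. Let $\x^*$ attain the right-hand side, i.e.\ $\wel(\x^*,\sigma_0) = \max_{\x\in\X_\ell}\wel(\x,\sigma_0)$. Because $\welbase$ ignores both truncation and cascading, it is permutation-independent, and by \eqref{eq:welfares} together with Lemma~\ref{lem:4ap_ub} every matching $\x$ satisfies $\welbase(\x,\sigma)\ge\welr(\x,\sigma_0)\ge\wel(\x,\sigma_0)$. Hence if I were allowed to plug $\x^*$ itself into the left-hand maximum I would be done immediately; the only obstruction is that $\x^*$ may use more than $2^\ell$ edges. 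So the entire content of the lemma is to exhibit a \emph{sparse} matching whose base welfare still dominates $\wel(\x^*,\sigma_0)$.

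First I would sparsify $\x^*$ by zeroing out the advertisers that contribute nothing to the restricted welfare: set $\x' = \zs(\x^*)$ as in Definition~\ref{def:zeroes}. By Lemma~\ref{lem:zeroes} this preserves the restricted click-through rates, so $\welr(\x')=\welr(\x^*)$, and moreover every advertiser that is \emph{not} the (unique) discounted advertiser now satisfies $\pir_i(\x') = \pib_i(\x')$. The surviving edges are exactly the advertisers with $\pir_i(\x^*)>0$, each matched to a single position.

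The heart of the argument is the cardinality bound $|\x'|\le 2^\ell$. For a non-last bucket $G_\ell$ every edge has $p_{ij} > 1/2^\ell$, so each surviving non-discounted advertiser has $\pir_i(\x')=\pib_i(\x') = p_{ij} > 1/2^\ell$. Since $\sum_i \pir_i(\x')\le 1$ by Lemma~\ref{lem:atmostone}, the number of such advertisers is strictly less than $2^\ell$, hence at most $2^\ell-1$; adding the at-most-one discounted advertiser guaranteed by Lemma~\ref{lem:disct} gives $|\x'|\le 2^\ell$. For the final bucket $\ell=\log_2(4m)$ the constraint is vacuous, as any matching uses at most $m<4m=2^\ell$ edges, so there I can simply take $\x'=\x^*$.

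Finally I would chain the inequalities: $\welbase(\x',\sigma)=\welbase(\x',\sigma_0)\ge\welr(\x',\sigma_0)=\welr(\x^*,\sigma_0)\ge\wel(\x^*,\sigma_0)$, invoking permutation-independence and \eqref{eq:welfares} for the first two relations, Lemma~\ref{lem:zeroes} for the equality, and Lemma~\ref{lem:4ap_ub} for the last. Since $\x'$ is feasible for the constrained maximum on the left, the claim follows. I expect the cardinality bound to be the only genuine obstacle—specifically the bookkeeping around the single discounted advertiser and the degenerate last bucket—whereas the welfare domination is essentially inherited from \eqref{eq:welfares}.
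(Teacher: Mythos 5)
Your proposal is correct and follows essentially the same route as the paper: both arguments rest on the counting fact that at most $2^{\ell}$ advertisers can have positive restricted click-through rate in $G_\ell$ (since each non-truncated one contributes $p_{ij}>1/2^{\ell}$ to a sum bounded by $1$), so an optimal matching can be pruned to one with at most $2^{\ell}$ edges without losing $\welr$, after which the chain $\welbase \ge \welr \ge \wel$ from \eqref{eq:welfares} and Lemma~\ref{lem:4ap_ub} finishes the proof. The only cosmetic difference is that you make the pruning explicit via $\zs$ and the unique discounted advertiser of Lemma~\ref{lem:disct}, whereas the paper states the edge-count bound directly and equates the constrained and unconstrained maxima of $\welr$.
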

\begin{proof}
Observe that for any (integral) matching $\x\in \X_\ell$, there can be at most $\min(m,2^{\ell})$ edges with non-zero contribution to the $\pir_i(\x)$'s, since for $\ell\leq \log_2(2m)$ every edge $(i,j)$ in $G_\ell$ has $p_{ij}>\frac{1}{2^{\ell}}$. This gives
\[
\max_{\x\in \X_\ell,|\x|\leq 2^{\ell}}\welr(\x,\sigma_0) = \max_{\x\in \X_\ell}\welr(\x,\sigma_0)
\]
Together with~\eqref{eq:welfares}, we have
\begin{align*}
    \max_{\x\in \X_\ell,|\x|\leq 2^{\ell}}\welbase(\x,\sigma)  \geq & \max_{\x\in \X_\ell,|\x|\leq 2^{\ell}}\welr(\x,\sigma_0)\\
    = & \max_{\x\in \X_\ell}\welr(\x,\sigma_0) \\
    \geq & \max_{\x\in \X_\ell}\wel(\x,\sigma_0)
\end{align*}
\end{proof}
We use the (canonical) greedy strategy (Algorithm~\ref{alg:greedy}) to find a matching $\x_\ell$ (with $|\x_\ell|\leq 2^{\ell}$) and an associated permutation $\sigma_\ell$ in $G_\ell$. We need to show that $\x_\ell$ gives a good approximation, and this largely follows the classic result that a greedy maximal matching is a $2$-approximation of the maximum weight matching, with slight adaptation to accommodate the cardinality constraint.
\begin{lemma}
\label{lem:baseApprox}
    Let $(\x_\ell,\sigma_\ell)$ be the result of Algorithm~\ref{alg:greedy} on $G_\ell$, we have
    \[
    \welbase(\x_\ell)\geq \frac{1}{2}\cdot \max_{\x\in \X_\ell,|\x|\leq 2^{\ell}}\welbase(\x).
    \]
    Note here we omit the permutation since $\welbase$ is independent of it.
\end{lemma}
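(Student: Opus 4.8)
The plan is to exploit that $\welbase$ is \emph{linear} in the matching: writing $w_{ij} := v_i p_{ij}$ for the weight of edge $(i,j)$, we have $\welbase(\x) = \sum_{i,j} x_{ij} w_{ij}$, so the lemma is purely a statement about maximum-weight matching. Concretely, Algorithm~\ref{alg:greedy} is exactly the classical greedy matching procedure that scans the edges of $G_\ell$ in non-increasing order of $w_{ij}$ and adds an edge whenever both endpoints are free, with the single twist that it halts once $k := \min(2^\ell, m)$ edges have been selected. Letting $M = \x_\ell$ be its output and $\hat{\x}$ a maximizer of $\welbase$ over $\{\x \in \X_\ell : |\x| \le 2^\ell\}$, the goal reduces to showing $w(M) \ge \tfrac12\, w(\hat{\x})$. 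I would first record that, since there are only $m$ positions, $|\hat{\x}| \le \min(2^\ell, m) = k$.

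Next I would recall the standard charging for greedy matching. Call an optimal edge $o \in \hat{\x}$ \emph{examined} if greedy inspects it before halting. Every examined edge is either selected by greedy (hence lies in $M$) or, at the moment it is inspected, one of its endpoints is already occupied by a previously selected greedy edge $g$ with $w(g) \ge w(o)$; map $o$ to that $g$ (or to itself when $o \in M$). Since a greedy edge $g=(x,y)$ meets at most one optimal edge through $x$ and one through $y$, it receives at most two such charges. If greedy halts by \emph{exhausting} all edges of $G_\ell$ (so $|M| < k$ and $M$ is maximal), then every edge of $\hat{\x}$ is examined and this charging already yields $w(\hat{\x}) \le 2\,w(M)$; this is the easy case.

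The main obstacle is the regime where greedy halts because it has reached the cardinality budget, i.e. $|M| = k$, since then some optimal edges are never examined and the classical charging ignores them. The adaptation rests on two observations: any unexamined optimal edge has weight at most $\tau := w(g_k)$ (the weight of the last selected greedy edge), because edges are scanned in non-increasing weight order; and $|\hat{\x}| \le k = |M|$. Splitting $\hat{\x}$ into its examined part $A$ and unexamined part $B$, the charging gives $\sum_{o \in A} w(o) \le \sum_{g \in M} d_g\, w(g)$, where $d_g \in \{0,1,2\}$ counts the charges on $g$ and $\sum_{g} d_g = |A|$, while $\sum_{o \in B} w(o) \le |B|\,\tau \le (k - |A|)\,\tau = \sum_{g \in M}(1-d_g)\,\tau$. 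Combining these termwise over the $k$ greedy edges and using $w(g) \ge \tau$ together with $d_g \le 2$ gives
\[
w(\hat{\x}) \le \sum_{g \in M}\bigl(d_g\, w(g) + (1-d_g)\tau\bigr) = \sum_{g \in M}\bigl(\tau + d_g(w(g)-\tau)\bigr) \le \sum_{g \in M}\bigl(2\,w(g) - \tau\bigr) = 2\,w(M) - k\tau \le 2\,w(M),
\]
and dividing by $2$ finishes the proof. The delicate point, and the only place that genuinely uses the cardinality constraint rather than the textbook analysis, is this capacity bookkeeping: the unexamined (hence light, weight $\le \tau$) optimal edges are charged against the ``unused'' portion of each greedy edge's charge budget, which is legitimate precisely because there are at most $k$ optimal edges but $k$ greedy edges, each of weight at least $\tau$. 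I expect verifying these two bounds on $A$ and $B$ and checking the termwise inequality to be the bulk of the work, whereas the reframing as weighted matching and the maximal-matching case are routine.
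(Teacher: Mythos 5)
Your proof is correct and follows essentially the same route as the paper: the paper uses a two-tokens-per-greedy-edge charging scheme in which examined optimal edges charge the blocking greedy edge and unexamined (hence weight $\le \tau$) optimal edges are paid from the leftover tokens, which is exactly your $d_g$-based bookkeeping in algebraic form. No gaps; the key observations (at most two charges per greedy edge, $|\hat{\x}|\le k=|M|$, and every greedy edge has weight at least $\tau$) match the paper's.
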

\begin{proof}
For each matched edge $(i,j)$ in $\x_\ell$, we put two tokens each of weight $v_i\cdot p_{ij}$ on the edge $(i,j)$. It's clear the total weight of all tokens we create is $2\cdot\welbase(\x_\ell)$, and we will show the tokens is enough to pay for the optimal $\welbase$. 

Let $\x$ be the optimal matching, and we consider each edge $(i,j)$ in $\x$. We can pay for the contribution of $(i,j)$ to $\welbase(\x)$ using a token we created. If $(i,j)$ is also in $\x_\ell$, we use one of the tokens on $(i,j)$, and each edge in $\x_\ell \cap \x$ pays at most one token in this case. If $(i,j)$ is not in $\x_\ell$, it must be one of the following cases when the greedy algorithm reaches $(i,j)$:
\begin{itemize}
    \item $j$ is already matched to $i'$ in $\x_\ell$; We know $(i',j)$ must have larger weight than $(i,j)$, and we can pay with a token on $(i',j)$. Each edge in $\x_\ell\setminus \x$ pays at most one token from this case.
    \item $i$ is already matched to $j'$ in $\x_\ell$; Similar to the above case, we can pay with a token on $(i,j')$. Again, each edge in $\x_\ell\setminus \x$ pays at most one token from this case.
    \item $\x_\ell$ already has $\min(m,2^\ell)$ edges; In this case any of the $2\cdot \min(m,2^\ell)$ tokens is larger than the weight of $(i,j)$. Note $\x$ also has at most $\min(m,2^\ell)$ edges, and each edge in $\x$ consumes (exactly) one of the tokens we created, thus after paying for all edges in $\x$ from the previous $2$ cases, we must have enough tokens left to pay for all the edges in this case.
\end{itemize}
From the above, we know $2\cdot \welbase(\x_\ell)\geq \welbase(\x)$, which proves our lemma.
\end{proof}
We proceed with the following result to bound the gap between $\wel(\x_\ell,\sigma_\ell)$ and $\welbase(\x_\ell,\sigma_\ell)$, which is analogous to Lemma~\ref{lem:4ap_lb}.
\begin{lemma}
\label{lem:welToBase}
    Let $(\x_\ell,\sigma_\ell)$ be the result of Algorithm~\ref{alg:greedy} on $G_\ell$, we have
    \[
    \wel(\x_\ell,\sigma_\ell)\geq \frac{1}{14}\cdot 
    \welbase(\x_\ell)
    \]
\end{lemma}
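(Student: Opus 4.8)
The plan is to unfold both welfare notions along the greedy order and reduce the statement to a comparison of two similarly-sorted sequences. Let $e_1,\dots,e_k$ be the matched edges of $\x_\ell$ in the order in which Algorithm~\ref{alg:greedy} added them, so $k=|\x_\ell|\le\min(2^\ell,m)$; write $p_t$ for the base click-through rate of $e_t$ and $w_t=v_{i_t}p_t$ for its weight, and note that the greedy order gives $w_1\ge\cdots\ge w_k$. Since $\sigma_\ell$ ranks the matched positions in exactly this order, the Cascade formula collapses to $\pi_{i_t}(\x_\ell,\sigma_\ell)=p_t\,D_t$ with $D_t:=\prod_{s<t}(1-p_s)$, so that
\[
\wel(\x_\ell,\sigma_\ell)=\sum_{t=1}^k w_t D_t,\qquad \welbase(\x_\ell)=\sum_{t=1}^k w_t,
\]
and the discounts satisfy $1=D_1\ge D_2\ge\cdots\ge D_k$. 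The entire task is thus to show that the discounted sum retains a constant fraction of $\sum_t w_t$, exploiting that $w_t$ is sorted and that membership in bucket $G_\ell$ pins down the magnitudes of the $p_t$.

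I would first dispose of the easy buckets. For the last bucket $G_{\log_2(4m)}$ every edge has $p_{ij}\le 1/(2m)$ and $k\le m$, hence $\sum_s p_s\le 1/2$ and $D_t\ge 1-\sum_{s<t}p_s\ge 1/2$ for all $t$, giving $\wel\ge\tfrac12\welbase$. For $\ell=1$ the cardinality cap forces $k\le 2$, and since $w_1\ge w_2$ the first term alone gives $\wel\ge w_1\ge\tfrac12(w_1+w_2)=\tfrac12\welbase$. Both cases are slack relative to the claimed $1/14$.

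The substantive case is $\ell\in[2,\log_2(2m)]$, where $1/2^\ell<p_t\le 2/2^\ell\le 1/2$, so in particular $1-p_t\ge 4^{-p_t}$ and hence $D_t\ge 4^{-\sum_{s<t}p_s}$. Mirroring Lemma~\ref{lem:4ap_lb}, I would take the threshold index $s$ to be the largest with $\sum_{t<s}p_t\le 1/2$. Every head edge $t\le s$ then has $D_t\ge 4^{-1/2}=1/2$, so $\wel\ge\tfrac12\sum_{t\le s}w_t$. It remains to show the head carries a constant fraction of the total weight. If $s=k$ this is immediate; otherwise $\sum_{t\le s}p_t>1/2$, and combining $p_t\le 2/2^\ell$ with $k\le 2^\ell$ yields $s>2^{\ell-2}$ and therefore $s/k>1/4$. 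Because the weights are sorted in decreasing order, the top $s$ of them average at least the overall average, so $\sum_{t\le s}w_t\ge (s/k)\sum_t w_t>\tfrac14\welbase$. Putting the two estimates together gives $\wel\ge\tfrac12\cdot\tfrac14\welbase=\tfrac18\welbase\ge\tfrac1{14}\welbase$, which suffices. (A cleaner but less transparent alternative is to apply Chebyshev's sum inequality to the two decreasing sequences $(w_t)$ and $(D_t)$, reducing the claim to a lower bound on the average discount $\tfrac1k\sum_t D_t$; a geometric-series estimate using $kp_t\le 2$ then bounds this average by $(1-e^{-2})/2$.)

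The main obstacle I anticipate is precisely the tail-weight estimate. The discount factors $D_t$ can become arbitrarily small deep in the cascade, so the whole argument hinges on showing that not too much of the sorted weight can accumulate in the heavily-discounted tail. This is where both the per-bucket bounds $1/2^\ell<p_t\le 2/2^\ell$ and the cardinality cap $k\le 2^\ell$ are indispensable: they force the prefix that reaches cumulative base mass $1/2$ to already contain more than a quarter of all matched edges, and the decreasing weight order then converts that count into a constant fraction of $\welbase$.
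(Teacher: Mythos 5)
Your proof is correct, and it reaches the same intermediate waypoints as the paper — unrolling $\wel(\x_\ell,\sigma_\ell)=\sum_t w_t D_t$ along the greedy order and truncating at the first index where the cumulative base mass exceeds $1/2$, so that every head term keeps at least half its undiscounted weight — but the crux, bounding the weight stranded in the tail, is handled by a genuinely different argument. The paper controls the tail by a value-ratio estimate: since greedy sorts by $v_i p_{ij}$ and all $p_{ij}$ in $G_\ell$ are within a factor of $2$, every tail advertiser's value is at most $2\min_{i\in[s+1]}v_i$, and since the total base mass is at most $2$ the tail contributes at most $3\min_{i\in[s+1]}v_i\le 6\sum_{i\le s+1}v_i\pibase_i$, giving the factor $7$ and hence $1/14$. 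You instead count edges: $\sum_{t\le s}p_t>1/2$ together with $p_t\le 2/2^\ell$ forces $s>2^{\ell-2}$, while the cardinality cap gives $k\le 2^\ell$, so the head holds more than a quarter of the matched edges, and the decreasing weight order converts that count into $\sum_{t\le s}w_t>\tfrac14\welbase$. Both arguments consume exactly the same structural facts (bucketed $p$'s within a factor of $2$, the cardinality cap, sortedness of the greedy weights), but yours yields the slightly better constant $1/8\ge 1/14$, so the lemma as stated follows. Two cosmetic remarks: the detour through $1-p\ge 4^{-p}$ is unnecessary, since the Weierstrass bound $D_t\ge 1-\sum_{t'<t}p_{t'}\ge 1/2$ (which the paper uses) already suffices for the head; and your separate treatment of $\ell=1$ is not needed either, as the main case's counting argument goes through there as well.
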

\begin{proof}
We largely follows the proof of Lemma~\ref{lem:4ap_lb}. We (re-)index the advertisers in the same order as indicated by $\sigma_\ell$, and let $s$ be the last advertiser with cumulative non-cascade click-through rates at most $0.5$, i.e. we have
\[
\sum_{i=1}^s \pibase_i(\x_\ell) \leq 0.5
\]
and, either $s = n$ or
\[
\sum_{i=1}^{s+1} \pibase_i(\x_\ell) > 0.5.
\]
And it is easy to see that by definition of $\pi$, for any advertiser $i$ up to $s+1$ we have $\pi_i(\x_\ell,\sigma_\ell)\geq 0.4\cdot \pibase_i(\x_\ell)$. 
If $s=n$, we simply have 
\[
\wel(\x_\ell,\sigma_\ell) = \sum_{i=1}^{s} v_i \pi_i(\x_\ell,\sigma_\ell)  \geq \sum_{i=1}^{s} v_i\cdot 0.4\cdot \pibase_i(\x_\ell) = 0.4\cdot \welbase(\x_\ell).
\]
Note if $\ell=\log_2 (4m)$ we must have the above case, since all $p_{ij}$'s in that subgraph is at most $1/(2m)$ and we can match at most $m$ edges. 

On the other hand, if $s < n$, we know the minimum value $\min_{i\in[s+1]}v_i$ of the first $s+1$ matched advertisers is at least half of the maximum value of matched advertisers after $s+1$. This is because the greedy algorithm goes through the edges in decreasing order of $v_i\cdot p_{ij}$ and all $p_{ij}$'s in $G_{\ell}$ for $\ell\neq \log_2(4m)$ are within a factor of $2$ of each other. Thus, for any two matched advertisers $i,i'$ with $i$ being matched before $i'$, we must have $v_i\geq v_{i'}/2$. Furthermore, we know $\sum_{i=1}^n \pibase_i(\x_\ell)\leq 2$ because of the cardinality constraint of $2^\ell$ (and $p_{ij}\leq 2/2^\ell$ in $G_\ell$), so $\sum_{i=s+2}^n \pibase_i(\x_\ell)\leq 2-0.5=1.5$. We then have
\begin{align*}
\welbase(\x_\ell) = &\sum_{i=1}^{s+1} v_i \pibase_i(\x_\ell) + \sum_{i=s+2}^{n} v_i \pibase_i(\x_\ell) \\
\leq & \sum_{i=1}^{s+1} v_i \pibase_i(\x_\ell) +  1.5\cdot2\cdot \min_{i\in[s+1]}v_{i}\\
\leq & \sum_{i=1}^{s+1} v_i \pibase_i(\x_\ell) +  \frac{3}{0.5}\cdot \sum_{i=1}^{s+1} v_i \pibase_i(\x_\ell)\\
= & 7 \cdot \sum_{i=1}^{s+1} v_i \pibase_i(\x_\ell)
\end{align*}
So we get $\sum_{i=1}^{s+1} v_i \pibase_i(\x_\ell)  \geq \frac{1}{7}\cdot \welbase(\x_\ell)$. Then we have,
\[
\wel(\x_\ell,\sigma_\ell) = \sum_{i\in[n]}v_i \pi_i(\x_\ell,\sigma_\ell)\geq \sum_{i=1}^{s+1}v_i \pi_i(\x_\ell,\sigma_\ell) \geq 0.5\cdot \sum_{i=1}^{s+1}v_i \pibase_i(\x_\ell)
\geq \frac{1}{14}\cdot \welbase(\x_\ell).
\]
\end{proof}
We are now ready to prove the main result of the greedy algorithm.
\greedyAlg*
\begin{proof}
Algorithm~\ref{alg:greedy} clearly runs in polynomial time since we only need to sort the (up to) $m\cdot n$ edges in the subgraph and the remaining work is linear in the number of edges. The approximation guarantee follows straightforwardly from Lemma~\ref{lem:baseToWel},~\ref{lem:baseApprox} and~\ref{lem:welToBase}
\[
\wel(\x_\ell,\sigma_\ell)\geq \frac{1}{14}\cdot\welbase(\x_\ell)\geq \frac{1}{28}\cdot\max_{\x\in \X_\ell,|\x|\leq 2^{\ell}}\welbase(\x)\geq \frac{1}{28}\cdot \max_{\x\in \X_\ell}\wel(\x,\sigma_0).
\]
As to the monotonicity, since we greedily go through the edges by their weight $v_i\cdot p_{ij}$, if the value $v_i$ of any (matched) advertiser $i$ increases, $i$ must be matched at least as early as before, so the cascading effect (i.e. the discounting) to $\pi_i$ cannot become stronger. Moreover, all edges adjacent to $i$ keep their relative order when $v_i$ increases, so the matched edge for $i$ must have a base click-through rate at least as large as before. These two together indicate that $\pi_i(\x_\ell)$ is monotonic in $v_i$.
\end{proof}
\end{document}